\newbox{\bigpicturebox}
\theoremstyle{plain} 
\newtheorem{thm}{Theorem}
\theoremstyle{definition}
\theoremstyle{remark} 
\newtheorem{ex}{Example}
\newcommand{\unif}{{\sf Unif}}
\newcommand{\nm}{{\sf N}}
\newcommand{\gam}{{\sf Gamma}}
\newcommand{\bet}{{\sf Beta}}
\newcommand{\RR}{\mathbb{R}}
\newcommand{\XX}{\mathbb{X}}
\newcommand{\UU}{\mathbb{U}}
\renewcommand{\SS}{\mathbb{S}}
\newcommand{\iid}{\overset{\text{\tiny iid}}{\,\sim\,}}
\newcommand{\ind}{\overset{\text{\tiny ind}}{\,\sim\,}}
\title{A PRticle filter algorithm for nonparametric estimation of multivariate mixing distributions}
\author{Vaidehi Dixit\footnote{Department of Statistics, North Carolina State University; {\tt vdixit@ncsu.edu}, {\tt rgmarti3@ncsu.edu}} \quad and \quad Ryan Martin$^*$}
\date{\today}
\begin{document}

\maketitle 

\begin{abstract}
Predictive recursion (PR) is a fast, recursive algorithm that gives a smooth estimate of the mixing distribution under the general mixture model.  However, the PR algorithm requires evaluation of a normalizing constant at each iteration.  When the support of the mixing distribution is of relatively low dimension, this is not a problem since quadrature methods can be used and are very efficient.  But when the support is of higher dimension, quadrature methods are inefficient and there is no obvious Monte Carlo-based alternative.  In this paper, we propose a new strategy, which we refer to as {\em PRticle filter}, wherein we augment the basic PR algorithm with a filtering mechanism that adaptively reweights an initial set of particles along the updating sequence which are used to obtain Monte Carlo approximations of the normalizing constants. Convergence properties of the PRticle filter approximation are established and its empirical accuracy is demonstrated with simulation studies and a marked spatial point process data analysis.


\smallskip
\emph{Keywords and phrases:} importance sampling; marked point process; mixture model; Monte Carlo; predictive recursion.
\end{abstract}

\section{Introduction}\label{intro}

Suppose we have independent and identically distributed (iid) data $X_1,\ldots,X_n$ having common density $m$ supported on $\XX$.  Furthermore, suppose that we believe this density has the mixture form $m=m_P$, where
\begin{equation}
\label{eq:mix}
    m_{P}(x) = \int_\UU k(x \mid u) \, P(du), \quad x \in \XX,
\end{equation}
with $k(x \mid u)$ a known kernel density and $P$ an unknown mixing distribution supported on $\UU$.  The family in \eqref{eq:mix} indexed by $P$ is commonly referred to as a {\em mixture model}.  One interpretation of the mixture model is that there is a set of underlying latent variables driving the data-generating process.  
That is, suppose the $X_i$'s are obtained through the two-step process: 
\begin{align*}
U_1,\ldots,U_n & \iid P \\
(X_i \mid U_i) & \ind k(x \mid U_i), \quad i=1,\ldots,n.
\end{align*}
It is easy to check that $X_1,\ldots,X_n$ from this hierarchical model formulation are iid with density $m_P$. This sort of hierarchical, latent variable modeling is common when heterogeneity is present in the observed data.  This also covers the class of problems where $U$ represents an unobservable ``signal'' of interest and $X$ the corresponding noise-corrupted signal, i.e., the ``signal plus noise.'' One also might adopt \eqref{eq:mix} simply for the flexibility the mixture model affords \citep[e.g.,][Chapter 33]{dasgupta2008}. In any case, the distribution of the latent variables, or signals, may be of some practical interest, in which case the goal becomes estimation of the unknown mixing distribution $P$ based on iid data $X_1,\ldots,X_n$ from the mixture $m_P$ in \eqref{eq:mix}.  This is our focus in the present paper.  



Estimation of the mixing distribution $P$ is a notoriously difficult problem.  Aside from methods tailored to specific mixture model forms, e.g., deconvolution \citep{fan1991, stefanski1990}, there are a few general estimation methods available: the two ``standard'' approaches are nonparametric maximum likelihood and nonparametric Bayes. The former maximizes the likelihood based on observations $X_1, \dots, X_n$ from $m_P$, with respect to $P$. Given the nonparametric nature of $P$, the resulting estimate is almost surely discrete and the points of support are no greater than $n$ \citep{lindsay1983}. The latter approach assigns a prior distribution to $P$, typically a Dirichlet process  \citep{ghoshramamoorthi2003, ghosal2017, hjortetal2010, ferguson1974}, and evaluate the corresponding posterior mean, given $(X_1,\ldots,X_n)$.  Even though there is no direct imposition of discreteness in the posterior, draws from the posterior distribution of $P$ have atoms \citep[e.g.,][]{blackwellmacqueen1973} and the corresponding posterior mean is spiky, ``effectively discrete.'' Hence, neither the likelihood nor Bayesian approaches give satisfactory solutions to the problem of estimating a mixing distribution $P$ in \eqref{eq:mix}.  The point is that, in these traditional approaches, the focus is on identifying candidate $P$ such that resulting mixture density $m_P$ is compatible with the empirical distribution of data, not specifically estimating the mixing distribution. 

A third general approach is available, which is the primary focus of this paper, called {\em predictive recursion} (PR).  Unlike the previous two methods, which are likelihood-based, the PR estimator is based on a stochastic, recursive algorithm that aims specifically to estimating the mixing distribution $P$ based on data from the mixture model \eqref{eq:mix}.  This strategy was first proposed in \citet{newtonetal1998} as a fast and smooth approximation to the posterior mean of $P$ under a Dirichlet process mixture model; see \citet{martin.pr.survey}. The idea behind PR is to start with a initial guess, $P_0$, and then update that guess recursively based on each individual observation $X_i$ for $i=1,\ldots,n$.  PR has a number of desirable computational and statistical properties.  First, PR is computationally efficient---its complexity is $O(n)$.  Second, the PR estimator, $P_n$, is absolutely continuous with respect to $P_0$, so if $P_0$ has a smooth density, then so does $P_n$.  Third, the PR estimator has also been shown to consistently estimate the true mixing distribution $P$ in a series of papers: \citet{tmg2009}, \citet{martintokdar2009}, and \citet{dixitmartin2021}.  

Applications of the PR algorithm have appeared in \citet{newton2002}, \citet{martintokdar2011}, \citet{martintokdar2012}, \citet{martinhan2016}, \citet{tansey2018}, \citet{woodyetal2022}, and  \citet{dixitmartin2022}. In each of these applications, however, the mixing distribution support $\UU$ is a relatively low-dimensional space, e.g., one- or two-dimensional.  The reason for this constraint is that, while the algorithm itself is completely general, computation of the normalizing constant in Equation \eqref{eq:PR} below can be a challenge when $\UU$ is more than two- or three-dimensional.  In particular, the required integration can only be done numerically, but efficient quadrature methods are available only when the domain of integration, in this case $\UU$, is low-dimensional.  A Monte Carlo-based strategy would be less sensitive to the dimension of $\UU$ and, in that sense, would have an advantage.  Unfortunately, no such Monte Carlo-based strategy is currently available in the literature, and this paper aims to fill this gap.  

Following a brief review in Section~\ref{S:background} of the PR algorithm and importance sampling techniques, we propose in Section~\ref{S:prticle} below the {\em PRticle filter} approximation. As the name suggests, this consists of an augmentation of the original PR algorithm with a filtering step that adaptively reweights an initial set of particles along the PR updating sequence.  The idea is that, at the $i^\text{th}$ step, the weighted set of particles resembles a sample from the PR estimate $P_{i-1}$ based on data $X_1,\ldots,X_{i-1}$. Hence, the $n^\text{th}$ step gives a particle approximation of the PR estimate $P_n$ and Theorem~\ref{thm:limit} below establishes that, for fixed data $X_1,\ldots,X_n$, this approximation converges almost surely in total variation distance to $P_n$ as the number of sampled particles approaches infinity.  

In Section~\ref{S:numerical}, we evaluate performance of the proposed PRticle filter approximation on both real and simulated data sets. For the simulated data sets, the evaluation is split into two types. First, to judge the accuracy of the proposed PRticle filter approximation, we compare it to the original PR estimator in cases where a quadrature scheme is feasible.  In our comparisons, the PRticle filter accurately approximates the PR estimate for simulations from mixtures corresponding to univariate and bivariate mixing distributions. Second, when the dimension of the mixing distribution support is too large for a quadrature scheme to be practical, we compare our PRticle filter approximation to a Dirichlet process mixture model-based estimator. The PRticle filter approximation is faster to compute and of comparable quality compared to the nonparametric Bayes estimator, which is one of the best known solutions.  

For a real data illustration, we consider an application where data consists of a marked spatial point process.  That is, the observed data consists of spatial locations at which specific events take place, along with some other relevant feature of the events, called marks.  As is common in spatial point process models, the relevant quantity is the intensity function.  Here we follow \citet{taddykottas2012} and model this intensity function as a mixture, with a multivariate mixing distribution support, and apply the PRticle filter approximation to estimate the mixing distribution and, in turn, the intensity function.  This naturally leads to estimates of other relevant features, including conditional distribution of the marks given the spatial locations.  We argue that the results obtained through our use of the PRticle approximation are consistent with patterns seen in the data and with those presented elsewhere in the literature.  This application simply would have been impossible using the basic PR algorithm.   
Some concluding remarks are given in Section~\ref{S:discuss} and the proof of Theorem~\ref{thm:limit} is presented in Appendix~\ref{proofs}.

\section{Background}
\label{S:background}

\subsection{Predictive recursion}
Suppose we have data $X_1, \ldots, X_n$ from $m_{P}$ in \eqref{eq:mix}, where the goal is estimation of the mixing distribution $P$. With a user-defined initial guess $P_0$ and weight sequence $\{w_i: i \geq 1\} \subset (0,1)$, the $i^\text{th}$ step in the PR algorithm is given by,
\begin{equation}\label{eq:PR}
P_i(du) = (1-w_i) \, P_{i-1}(du) + w_i \, \frac{k(X_i \mid u) P_{i-1}(du)}{\int k(X_i \mid u) \, P_{i-1}(du)}, \quad i = 1, \ldots, n 
\end{equation}
For theoretical reasons, the weights must satisfy $\sum_{i=1}^\infty w_i = \infty$ and $\sum_{i=1}^\infty w_i^2 < \infty$; this can be achieved by taking, e.g., $w_i = (i+1)^{-\gamma}$ for some $\gamma \in (0.5, 1]$. The algorithm processes the $n$ data points sequentially and returns the final update $P_n$ as the PR estimator of the mixing distribution. The corresponding PR mixture density estimate is $m_n = m_{P_n}$, where the mapping $P \mapsto m_P$ is given in \eqref{eq:mix}. It is clear that the PR estimator $P_n$ depends on the ordering of the observations $X_1, \dots, X_n$. If this dependence is undesirable, then it can be removed---or at least mitigated---by calculating $P_n$ over multiple permutations of the data and averaging over the estimates \citep{newton2002, tmg2009}. With the superior computational efficiency of PR, this permutation-averaging can still be carried in a fraction of the run-time of its competitors. 

Key features of the PR algorithm/estimator include its ability to estimate a mixing density and its computational efficiency. By the former, we mean that if the user-defined initial guess $P_0$ has a smooth density with respect to a particular dominating measure, then the final PR estimator $P_n$ will too.  Compare this to the maximum likelihood and Bayes estimators, which are necessarily (or ``effectively'') discrete.  By the latter computational efficiency claim, we mean that each PR step requires a fixed number of computations, so the overall computational complexity of PR algorithm is $O(n)$. 

As mentioned in Section~\ref{intro}, the key step in each iteration of the PR algorithm is calculation of the normalizing constant $\int k(x_i \mid u) \, P_{i-1}(du)$. Since $P_{i - 1}$ is data-driven and fully nonparametric, we cannot expect there to be a closed-form expression for the normalizing constant. Often it can be approximated numerically using a quadrature scheme; this is especially easy to do so when the mixing distribution support $\UU$ is univariate. However, for as the dimension of $\UU$ increases, computation of the normalizing constant becomes more and more challenging.  For example, the number of grid points required for accurate quadrature grows exponentially in the dimension of $\UU$ and becomes infeasible or at least inefficient even for moderate $\text{dim}(\UU)$. This creates a computational bottleneck.  

In previous work, this challenge was side-stepped by treating some of the latent variables as mixing variables and the others as non-mixing/structural parameters.  For example, instead of mixing the kernel $k(x \mid u_1, u_2)$ over both the location $u_1$ and scale $u_2$, the proposal in \citet{martintokdar2011} was to treat, say, the scale parameter $u_2$ as a fixed unknown, so that mixing is required only over the univariate $u_1$-space.  Then they developed a PR-based marginal likelihood for $u_2$ that could be used for simultaneous estimation of the scale $u_2$ and the corresponding mixing distribution over $u_1$.  This effectively reduces the dimension of the mixing distribution support, thus making it easy to side-step the challenges in computing the normalizing constant.  For various reasons, however, it would be preferable to deal with the computational challenges directly, as opposed to using a ``hack'' to reduce the dimension artificially.  This requires new ideas for evaluating the normalizing constant in \eqref{eq:PR} and, for this, here we develop a novel strategy based on the same ideas behind sequential importance sampling.

\subsection{Importance sampling and filtering}

The approximation we propose in Section~\ref{S:prticle} uses the principles behind importance sampling and particle filters in general. Before stating our algorithm, we first review these basic principles. Consider the general problem of integrating a function $h$ with respect to a probability density $p$, where $U \in \UU \subset \RR^d$, for $d \geq 1$. 
In cases where numerical integration is infeasible, e.g., if $d$ is too large or if either $h$ or $p$ is too rough, it is common to use a Monte Carlo approximation by averaging over a random set of observations from probability density $p$. However, a problem arises if $p$ cannot be efficiently sampled from. In such cases, an {\em importance sampling} approach can be employed. This amounts to generating samples from a different distribution, say with density $q$, and then reweighting those samples so that they resemble samples from $p$.  In particular, the expected value of $h$ with respect to $p$ can be written as
\[ \int_\UU h(u) \, p(u) \, du = \int_\UU h(u) \, \frac{p(u)}{q(u)} \, q(u) \, du, \]
and this immediately suggests the Monte Carlo approximation 
\[ \frac1T \sum_{t=1}^T \alpha_t \, h(U_t), \]
where $\{U_t: t=1,\ldots,T\}$ are iid samples from $q$ and $\alpha_t = p(U_t)/q(U_t)$ are the weight adjustment factors.  If the normalizing constant for $p$ is unknown, then the $T^{-1}$ factor can be replaced by $(\sum_{t=1}^T \alpha_t)^{-1}$. 

The ratio $\alpha_t = p(U_t) / q(U_t)$ helps to effectively filter out points in low $p$-density regions while increasing the weight put on particles in high $p$-density regions. \citet{agapiouetal2017} unify the existing literature on importance sampling with a special focus on determining the size of $T$ such that error in approximation is minimized.
The choice of $T$, the Monte Carlo sample size, is important, both in terms of accuracy and efficiency. A practical measure of efficiency used for importance sampling is the {\em effective sample size} (ESS), i.e., the effective number of particles. 
Following \citet{kong1992}, a commonly used expression for ESS is 
\begin{equation}\label{eq:ess}
    \text{ESS} = \frac{ (\sum_{t=1}^T \alpha_t)^2}{\sum_{t=1}^{T} \alpha_t^2},
\end{equation}
where $\alpha_t = p(U_t)/q(U_t)$ as before.  By  Cauchy--Schwartz, ESS is bounded above by $T$, and the closer it is to $T$ the more efficient the importance sampler.  So the goal is to choose the proposal density $q$ such that ESS is as close to $T$ as possible.

These basics behind importance sampling can be connected to more sophisticated Monte Carlo methods with the following interpretation. The procedure above essentially starts with a collection of tentative sample points from $p$, which are commonly referred to as {\em particles}. Particles which have small importance ratios, $p/q$, are given small weight, and effectively {\em filtered out}. In this sense, importance sampling is a (basic) form of {\em particle filtering}. This idea can then be extended in different directions. In particular, it would be possible for the target distribution, $p$, to be changing over some ``time'' index. In hidden Markov models, for example, the dimension of the target distribution's support is increasing with time; also, in Bayesian inference, the target $p$ is the posterior distribution which is evolving with the sample size $n$.  Sequential Monte Carlo methods have proved useful in these problems \citep[e.g.,][]{doucetetal2001, doucetjohansen2011, deletal2006}. 
Sequential importance sampling, in particular, is a powerful tool for particle filtering \citep{agapiouetal2017, tokdarkass2010}. In the context of mixture models, sequential importance sampling \citep[eg.][]{maceachernetal1999} and particle learning algorithms \citep[eg.][]{carvalhoetal2010particle} have been suggested for analyzing mixture models in the Bayesian setting. In our present case, sequential updating is required because we need particles that represent the PR estimate $P_i$ as $i=1,2,\ldots,n$. This problem is due to the unique recursive structure inherent in the PR sequence of target distributions and, therefore, calls for different or at least more specialized techniques compared to what is currently available in the sequential Monte Carlo literature.

\section{PRticle filter approximation}
\label{S:prticle}

\subsection{Algorithm}

In this section we propose a particle filter algorithm designed specifically to approximation the PR estimator. For simplicity, and without any real loss of generality, assume that $P_0$ has a density with respect to Lebesgue measure on $\UU \subset \RR^d$, denote by $p_0$.  Then all the subsequent PR updates $P_i$ have such a density, denoted by $p_i$.  At each iteration of PR, one needs to calculate a normalizing constant
\[ m_{i-1} (X_i) = \int_\UU k(X_i \mid u) \, p_{i-1} (u) \, du, \quad  i = 1, \ldots, n. \]
The analytical form of $p_{i-1}$ is unknown so clearly we cannot evaluate this in closed form.  Likewise, we cannot directly generate observations from it to get a Monte Carlo approximation.  However, we know that it is a function of the previous updates $p_0, \ldots, p_{i-2}$, so the idea is to leverage the PR algorithm's recursive formulation and those core importance sampling principles to design an efficient Monte Carlo/particle filter approximation. 

Recall that $p_0$ is a user-specified density on $\UU$ and we will assume that sampling from $p_0$ is feasible. Generate an iid sample $U_1, \ldots, U_T$ of size $T \gg 1$ from $p_0$. Then, a simple Monte Carlo average gives us an approximation of the first normalizing constant, 
\[ \hat m_0 (X_1) = \frac{1}{T} \sum \limits_{t=1}^{T} k(X_1 \mid U_t) \]
where each point $U_t$ is equally weighted by $T^{-1}$. Next, we do not know the form of $p_1$ but we know that it can be expressed in terms of $p_0$ and the data point $X_1$ as
\begin{align*}
p_{1}(u) & = (1 - w_1)p_{0}(u) + w_{1} \frac{k(X_1 \mid u) p_{0}(u)}{m_0 (X_1)} \\
& = \Bigl\{ 1 + w_1 \Bigl( \frac{k(X_1 \mid u)}{m_0 (X_1)} - 1 \Bigr) \Bigr\} p_0(u). 
\end{align*}
This implies the ratio of consecutive PR density estimates is 
\[ \frac{p_1(u)}{p_0(u)} = \delta_0(u) := \Bigl\{ 1 + w_1 \Bigl( \frac{k(X_1 \mid u)}{m_0 (X_1)} - 1 \Bigr) \Bigr\}. \]
Now, since 
\[ m_1(X_2) = \int k(X_2 \mid u) \, p_1(u) \, du = \int k(X_2 \mid u) \, \delta_0(u) \, p_0(u) \, du, \]
we have a very natural Monte Carlo approximation of $m_1(X_2)$, namely, 
\[ \hat m_1(X_2) = \frac{1}{T}\sum \limits_{t=1}^{T} k(X_2 \mid U_t) \, \hat\delta_{0}(U_t), \]
where $\hat\delta_0(u)$ is based on plugging in $\hat m_0(X_1)$ for $m_0(X_1)$ in the definition of $\delta_0$ above.  Here $\hat\delta_0(\cdot)$ acts as a mesh that effectively filters out those particles that are not compatible with the updated distribution $p_1$.  Continuing with the same logic, for the $i^\text{th}$ iteration, we get
\[ \hat m_{i-1}(X_i) = \frac{1}{T}\sum \limits_{t=1}^{T} k(X_i \mid U_t) \, \hat \Delta_{i}(U_t), \quad i \geq 1, \]
where $\hat \Delta_1(u) \equiv 1$ and 
\begin{align*}
\hat \Delta_{i}(u) & = \hat \Delta_{i-1}(u) \, \hat\delta_{i-2}(u) \\
& = \prod\limits_{j=2}^{i} \left \{ 1 + w_{j-1} \left ( \frac{k(X_{j-1} \mid u)}{\hat m_{j-2} (X_{j-1})} - 1 \right ) \right \}, \quad i \geq 2. 
\end{align*}
The above steps make up the PRticle filter approximation and these are summarized in Algorithm~\ref{algo:prticle}.  In the end, the algorithm returns the pairs $\{(U_t, \hat \Delta_n(U_t)): t=1,\ldots,T\}$ that collectively represent an approximate sample from the PR estimator $P_n$.  From this sample, any features of $P_n$ can be approximated as usual.  If an estimate of the density $p_n$ were required, then the weighted collection of particles can be smoothed using, e.g., a kernel density estimator. Just like the PR estimator, the PRticle filter approximation depends on the ordering of observations, and same permutation-averaging can be used here to mitigate the order-dependence, if desired.


\begin{algorithm}[t]
\SetAlgoLined
Initialize: Data $X_1, \ldots, X_n$, initial guess $p_0$, random sample $U_1, \ldots, U_T$ from $p_0$, and weight sequence $\{w_i: i \geq 1\} \subset (0,1)$\; 
Set $\hat \Delta_t = 1$ for $t=1,\ldots,T$\;
 \For{$i=1,\ldots,n$}{
  set $N_{t,i} = k(X_i \mid U_t) \, p_{i-1}(U_t)$ for each $t$, and $D_i = T^{-1}\sum_{t=1}^{T} k(X_i \mid U_t) \, \hat \Delta_t$\;
  update  $p_i(U_t) = (1 - w_i) p_{i-1}(U_t) + w_i \, N_{t,i} / D_i$ for each $t$\;
  evaluate $\hat \Delta_t = \hat \Delta_t [1 + w_i \{k(X_i \mid U_t) / D_i - 1\} ]$ for each $t$\;
  }
 return $U_t$ and weights $\hat \Delta_t$, for $t=1,\ldots,T$.
 \caption{\textbf{PRticle filter approximation}}
 \label{algo:prticle}
\end{algorithm}

\subsection{Convergence}

A relevant question would be of the convergence of the PRticle filter approximation $\hat p_n = \hat p_{n,T}$ to the corresponding PR estimate $p_n$, as the Monte Carlo sample size $T$ goes to $\infty$. If we had just one data point $X_1$, then $\hat p_{1,T} \to p_1$ simply by the law of large numbers as this only involves the simple Monte Carlo approximation of $m_0(X_1)$. However, as we include more observations, the $i^\text{th}$ approximation of $p_i$ consists of the previous $i-1$ approximations and the law of large numbers argument is not immediately clear. But it turns out that the law of large numbers can be applied to show that PRticle filter approximation, $\hat p_{n,T}$, converges to its target $p_n$ in a very strong sense as $T \to \infty$. 


\begin{thm}
\label{thm:limit}
For a fixed data set $X_1,\ldots,X_n$, let $p_n$ and $\hat p_{n,T}$ denote the PR estimator and its PRticle filter approximation, respectively, both based on the same initial guess with distribution $P_0$.  If the kernel is such that 
\begin{equation}
\label{eq:kernel.assumption}
\int_\UU \Bigl\{ \prod_{i \in {\cal S}} k(X_i \mid u) \Bigr\} \, P_0(du) < \infty, \quad \text{for all ${\cal S} \subseteq \{1,\ldots,n\}$}, 
\end{equation}
then 
\[ \int_\UU |\hat p_{n,T}(u) - p_n(u)| \, du \to 0, \quad \text{with $P_0$-probability~1 as $T \to \infty$.} \]
\end{thm}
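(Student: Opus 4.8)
The plan is to reduce the total variation distance to a statement about finite products of the per-step update factors, and then to close an induction on the sample size using only the strong law of large numbers (SLLN); the randomness throughout is over the iid draws $U_1,\ldots,U_T\sim p_0$, whereas $p_n$ is a fixed function of the data, $P_0$, and the weights. First I would unwind both recursions. Writing, for $k=0,\ldots,n-1$,
\[ \delta_k(u) = (1-w_{k+1}) + \frac{w_{k+1}}{m_k(X_{k+1})}\,k(X_{k+1}\mid u), \qquad \hat\delta_k(u) = (1-w_{k+1}) + \frac{w_{k+1}}{\hat m_k(X_{k+1})}\,k(X_{k+1}\mid u), \]
iterating \eqref{eq:PR} in density form gives $p_n(u) = p_0(u)\prod_{k=0}^{n-1}\delta_k(u)$, and the weighted-particle density produced by Algorithm~\ref{algo:prticle} is $\hat p_{n,T}(u) = p_0(u)\prod_{k=0}^{n-1}\hat\delta_k(u)$ (equivalently, $p_0$ times the final particle weights $\hat\Delta_t$). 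The only difference between the two products is that the true normalizing constants $m_0(X_1),\ldots,m_{n-1}(X_n)$ have been replaced by their Monte Carlo surrogates $\hat m_0(X_1),\ldots,\hat m_{n-1}(X_n)$, and
\[ \int_\UU |\hat p_{n,T}(u) - p_n(u)|\,du = \int_\UU \Bigl| \prod_{k=0}^{n-1}\hat\delta_k(u) - \prod_{k=0}^{n-1}\delta_k(u) \Bigr|\,p_0(u)\,du, \]
so it suffices to show this tends to $0$ with $P_0$-probability~$1$. (If the renormalized particle density is preferred, the same bound gives $\int p_0\prod_k\hat\delta_k \to \int p_0\prod_k\delta_k = 1$, so renormalizing changes nothing; throughout I use, as is needed for PR to be well defined, that every $m_k(X_{k+1})$ is strictly positive.)

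The structural fact that makes everything work is that each $\delta_k$ is a nonnegative affine function of $u\mapsto k(X_{k+1}\mid u)$, so any finite product $\prod_{k\in A}\delta_k$ expands into a finite nonnegative linear combination of the functions $u\mapsto\prod_{i\in{\cal S}}k(X_i\mid u)$, ${\cal S}\subseteq\{1,\ldots,n\}$. Assumption \eqref{eq:kernel.assumption} says exactly that every such function is $P_0$-integrable, so the SLLN applies to the Monte Carlo average over $U_1,\ldots,U_T$ of any of them, and of any product of one with an extra kernel factor $k(X_i\mid\cdot)$.

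The key lemma, proved by induction on $i$, is that $\hat m_{i-1}(X_i)\to m_{i-1}(X_i)$ with $P_0$-probability~$1$ for each $i=1,\ldots,n$. The base case $i=1$ is immediate: $\hat m_0(X_1) = T^{-1}\sum_t k(X_1\mid U_t)\to m_0(X_1)$ by the SLLN, the mean being finite by \eqref{eq:kernel.assumption}. For the inductive step, decompose $\hat m_{i-1}(X_i) - m_{i-1}(X_i) = A_T + B_T$, where
\[ B_T = \frac1T\sum_{t=1}^T k(X_i\mid U_t)\prod_{k=0}^{i-2}\delta_k(U_t) - \int_\UU k(X_i\mid u)\prod_{k=0}^{i-2}\delta_k(u)\,p_0(u)\,du \]
involves the fixed integrand $k(X_i\mid\cdot)\prod_{k<i-1}\delta_k$ and tends to $0$ a.s.\ by the previous paragraph, while $A_T = T^{-1}\sum_t k(X_i\mid U_t)\{\prod_{k=0}^{i-2}\hat\delta_k(U_t) - \prod_{k=0}^{i-2}\delta_k(U_t)\}$. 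Telescoping, $\prod_{k=0}^{i-2}\hat\delta_k - \prod_{k=0}^{i-2}\delta_k = \sum_{l=0}^{i-2}\bigl(\prod_{k<l}\hat\delta_k\bigr)(\hat\delta_l - \delta_l)\bigl(\prod_{k>l}\delta_k\bigr)$, and $|\hat\delta_l(u) - \delta_l(u)| = w_{l+1}\,k(X_{l+1}\mid u)\,\eps_l / \{\hat m_l(X_{l+1})\,m_l(X_{l+1})\}$ with $\eps_l := |\hat m_l(X_{l+1}) - m_l(X_{l+1})|$. Since $l\le i-2$, the induction hypothesis gives $\eps_l\to 0$ a.s.\ and $\hat m_l(X_{l+1})\ge\tfrac12 m_l(X_{l+1})>0$ for all large $T$, a.s.; the latter also lets me bound each random factor $\hat\delta_k$, uniformly in $t$ and for all large $T$, a.s., by a fixed nonnegative affine function of $k(X_{k+1}\mid\cdot)$. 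Substituting into $A_T$ yields a finite sum whose $l$-th term is bounded, for all large $T$, by $\eps_l$ times a fixed constant times a Monte Carlo average of a fixed $P_0$-integrable function, the average being convergent by the SLLN; hence $A_T\to 0$ a.s., completing the induction.

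Finally, the same telescoping-and-domination estimate applied directly to $\int_\UU|\prod_k\hat\delta_k - \prod_k\delta_k|\,p_0\,du$ — now an integral against $p_0$, so no $B_T$ term appears — bounds it by a finite sum of terms $\eps_l$ (for $l\le n-1$) times finite integrals, which tends to $0$ a.s.\ by the key lemma. The main obstacle, and the reason a plain SLLN does not apply at once, is that $\prod_{k=0}^{i-2}\hat\delta_k(U_t)$ depends on the \emph{entire} particle cloud $U_1,\ldots,U_T$ through the sample-average normalizing constants $\hat m_k$, so $\hat m_{i-1}(X_i)$ is not an average of iid summands; the $A_T/B_T$ split isolates the genuinely iid piece $B_T$, while the fact that each $\hat m_k$ is a single scalar for each $T$ is what permits the random factors $\hat\delta_k$ to be dominated by deterministic functions simultaneously for all particles once $T$ is large. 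The remaining work — reconciling the finitely many ``for large $T$, almost surely'' qualifiers and expanding the affine products into integrable monomials $\prod_{i\in{\cal S}}k(X_i\mid\cdot)$ — is routine bookkeeping.
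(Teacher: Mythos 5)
Your proposal is correct, and it shares the paper's skeleton --- the key lemma is the same induction showing $\hat m_{i-1,T}(X_i)\to m_{i-1}(X_i)$ almost surely, with assumption \eqref{eq:kernel.assumption} entering precisely so that the SLLN applies to Monte Carlo averages of the monomials $u\mapsto\prod_{i\in\mathcal{S}}k(X_i\mid u)$ --- but the execution differs in two places. In the inductive step, the paper expands the full product $\prod_{i\le\ell}\{(1-w_i)+w_i k(X_i\mid U_t)/\hat m_{i-1,T}(X_i)\}$ into all $2^\ell$ subset terms and distributes the particle average through, so that each term is a ratio of a convergent (by the induction hypothesis) product of $\hat m$'s and an iid average converging by the SLLN; you instead split $\hat m_{i-1}-m_{i-1}=A_T+B_T$, treat the genuinely iid piece $B_T$ by the SLLN, and control $A_T$ by telescoping $\prod\hat\delta_k-\prod\delta_k$ and dominating the random factors once $\hat m_l\ge\tfrac12 m_l$ for large $T$. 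Your route is slightly more work but more quantitative: it exhibits the error as a finite sum of $\eps_l=|\hat m_l-m_l|$ times a.s.-bounded averages. In the conclusion, the paper passes from pointwise convergence of $\hat p_{n,T}$ to $L_1$ convergence via Scheff\'e, which strictly speaking requires noting that $\int\hat p_{n,T}\,du\to 1$ (the particle weights are not exactly normalized, so $\hat p_{n,T}$ is not exactly a probability density); your direct telescoping bound on $\int|\prod\hat\delta_k-\prod\delta_k|\,p_0\,du$ sidesteps this point entirely, which is a minor but genuine tightening. Both arguments are valid and rest on the same integrability hypothesis used in the same way.
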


\begin{proof}
See Appendix~\ref{proofs}. 
\end{proof}

Theorem~\ref{thm:limit} establishes that, with a sufficiently large Monte Carlo sample size $T$, the PRticle filter approximation, $\hat p_{n,T}$, of the PR mixing density estimator $p_n$ will be quite accurate.  Note that $L_1$/total variation convergence implies weak convergence, so virtually any relevant functional of $p_n$ can be accurately approximated by the corresponding functional of $\hat p_{n,T}$.  The condition \eqref{eq:kernel.assumption} on the kernel is rather mild, e.g., it is satisfied if $u \mapsto k(x \mid u)$ is bounded for almost all $x$. Beyond the fixed-data approximation, the result in Theorem~\ref{thm:limit}, together with the general results in \citet{martintokdar2009} and \citet{dixitmartin2021} on the consistency properties of $p_n$ as $n \to \infty$, suggests that $\hat p_{n,T}$ would also be a good estimator of $p$ when both $n$ and $T$ are large.

\subsection{Adaptation to handle attrition}
\label{attrition}

The final estimate $p_n$ will depend on the initial $p_0$, not just through the default PR mechanism but also through the dependence on the choice of particles $U_1,\ldots,U_T$ from $p_0$.  To ensure that $p_n$ captures the true mixing density $p$, it is generally recommended to choose a relatively diffuse $p_0$ in the PR algorithm. 
However, the true $p$ is likely to be more concentrated in certain regions of $\UU$ than in others.  So those chosen particles $U_1,\ldots,U_T$ from $p_0$ that happen to fall in those $p$-low-density regions of $\UU$ should be assigned relatively low weights.  The concern is that too many of the particles end up in these low-density regions, hence affecting the effective number of particles. Recall that, an efficiency measure of the particle filter is given by the effective sample size (ESS) in \eqref{eq:ess}. For our case this can be calculated as,
\[\text{ESS} = \frac{\bigl\{\sum_{t=1}^{T} \Delta_n(U_t)\bigr\}^{2}}{\sum_{t=1}^{T} \Delta_n(U_t)^{2}}. \]
If too many particles end up with negligible weights, i.e., if $\Delta_n(U_t) \approx 0$ for $t$', then ESS becomes significantly smaller than $T$. This loss-of-information, called {\em attrition}, is a common problem in importance sampling or particle filtering  \citep[eg.][]{doucetjohansen2011}; and it cannot be ignored because the effective sample size is what controls the accuracy of the Monte Carlo approximations. To account for this, the general strategy is to resample points from the region of importance such that ESS does not reduce tragically \citep[e.g.,][]{carvalhoetal2010particle, doucetetal2001}. The strategy we propose here is in the same spirit as adaptive importance sampling \citep[eg.][]{bugallo2017}. Below we describe our approach that accounts for attrition, specific to the PRticle filter.  We start by summarizing the characteristics of the final PR estimate $p_n$ to improve upon the initial filter $U_t$. This summary can then be used to construct a new informed $p_0$ so that an updated filter has more points in the more dense regions of $p$.

Given points $U_1, \ldots, U_T$ and the final weights $\Delta_{n}(U_t)$ representing the PR estimate $p_n$ we can easily obtain Monte Carlo approximations of 
\[\mu_n = \int u \, p_n(u) \, du \quad \text{and} \quad \Sigma_n = \int (u - \mu_n) (u - \mu_n)^\top \, p_n(u) \, du, \]
the mean vector and covariance matrix associated with $p_n$, respectively, given by 
\[ \hat\mu_n = \frac1T \sum\limits_{t=1}^{T} U_t~ \Delta_n (U_t) \quad \text{and} \quad \widehat\Sigma_n = \frac1T \sum\limits_{t=1}^{T} (U_t - \hat\mu_n) (U_t - \hat\mu_n)^\top \Delta_n (U_t). \]
This helps us to identify a region where $p_n$---and likely $p$ as well--- has high concentration. This information can then be incorporated in a new updated $p_0$. A reasonable strategy, therefore, is to redefine the initial estimate $p_0$ to be, e.g., a multivariate Student-t distribution with location $\hat\mu_n$ and scale matrix $\widehat\Sigma_n$. An iid sample is then generated from this new $p_0$ and the PRticle procedure is carried out as before. 

This idea of updating the PRticle filter can be extended in various  ways. One way is to repeat the aforementioned process more than once. However, in our experience, this can lead to shrinkage of the region of interest beyond of what is needed resulting into points from only a highly dense region and no points elsewhere.  Alternatively, one could identify several particles having relatively large weight following the initial pass of PRticle filter; use these as locations around which a multivariate Gaussian or Student-t distribution could be centered; and then take the updated $p_0$ to be a mixture of these few distributions and sample particles from there. Yet another strategy is to resample particles from $p_0$ after one run of PR and rerun the algorithm by replacing the low-weight particles by the new particles. This allows for identifying new regions of interest while removing any low-probability regions. This is a classical strategy of resampling used in particle filters for Bayesian problems \citep{carvalhosmoothing2010}. Any of the approaches suggested above would be useful in reducing attrition of particles, but more deliberation is needed to conclude which of these strategies would be most efficient. For our purposes here, we use the strategy mentioned before and our simulation results in Section \ref{S:numerical} show that this is effective in reducing attrition.


\section{Numerical results}
\label{S:numerical}


\subsection{Density estimation}

Here we show three density estimation examples.  Examples~1--2, involving Euclidean data and data on a sphere, respectively, compare the PRticle filter approximation to the original PR estimator in low-dimensional cases where the latter can be computed efficiently.  Example~3 considers cases where the mixing distribution support is too high-dimensional to compute the original PR estimator, so we compare the PRticle filter approximation results to those of the Dirichlet process mixture model fit. 

\begin{ex}
For $d$-dimensional data $X$, consider a normal mixture model of the form \eqref{eq:mix} with $k(x \mid u) = \nm_d(x \mid u, \sigma^2 I_d)$ the multivariate normal density with mean vector $u$, where $I_d$ is the $d$-dimensional identity matrix.  Throughout, $\sigma^2=0.5$ will be taken as fixed.  So that we can compare the PRticle filter approximation directly to the original PR estimator, we consider only the cases $d=1$ and $d=2$ here.  For the $d=1$ case, we take the true mixing distribution to be $P= \bet_{[0,10]}(10,5)$, a beta distribution scaled to $\UU = [0, 10]$; for the $d=2$ case, we take $P = \bet_{[0,10]}(10,5) \times \bet_{[0,10]}(5,10)$, a joint distribution supported on $\UU=[0,10]^2$ corresponding to independent scaled beta marginals.  In both cases, samples of size $n=500$ are generated and we compare the PRticle filter approximation to the original PR estimator in terms of the {\em Kullback--Leibler divergence} $K(m_n, \hat m_n)$, where $m_n$ is the PR estimator of the mixture density and $\hat m_n$ is the corresponding PRticle filter approximation.  Both are based on weight sequence $w_i = (i+1)^{-1}$ and initial guess $P_0 = \unif(\UU)$.  The PRticle filter approximation relies on samples $U_1,\ldots,U_T$ taken from $P_0$ and here we consider four samples sizes, $T \in \{100, 300, 500, 1000\}$.  Table~\ref{tab:firstcomparisons} summarizes both the Kullback--Leibler divergence and the ESS for both the $d=1$ and $d=2$ cases.  As expected, the ESS tends to be smaller for $d=2$ than for $d=1$, with the former retaining about 15\% of the original sample while the latter retains about 33\%. However, the Kullback--Leibler divergence tends to be small across the board and does not vary much as a function of $T$ for both cases. 
\end{ex}

\begin{table}
\centering
\begin{tabular}{ll}
\begin{tabular}{*{3}{c} }
\hline
    $T$ & ESS & $K(m_n, \hat m_n)$ \\
    \hline
    100 & 33.6 & 0.0063 \\
    300 & 105.3 & 0.0001 \\
    500 & 162.9 & 0.0003 \\
    1000 & 330.7 & 0.0002\\
    \hline
\end{tabular}
& 
\begin{tabular}{*{3}{c} }
\hline
    $T$ & ESS & $K(m_n, \hat m_n)$ \\
    \hline
    100 & 10.5 & 0.0600 \\
    300 & 44.8 & 0.0326 \\
    500 & 81.7 & 0.0260 \\
    1000 & 140.7 & 0.0200\\
    \hline
\end{tabular}
\end{tabular}
\caption{Numerical results for Example~1: comparisons between $m_n$ and $\hat m_n$ with $d=1$ (left) and $d=2$ (right).  Comparisons are made in terms of effective sample size (ESS) and Kullback--Leibler divergence $K(m_n, \hat m_n)$.}
    \label{tab:firstcomparisons}
\end{table}

\begin{ex}
Next, following \citet{dixitmartin2022}, we compare the PRticle filter approximation to the original PR estimate for mixture models on the unit sphere $\SS \subset \RR^3$ commonly used for directional data.  The particular mixture model we consider is one with a so-called angular Gaussian distribution \citep{tyler1987} kernel 
\[k(x \mid \mu, \beta) \propto |\Sigma_{\mu, \beta}|^{-1/2} (x^\top \, \Sigma_{\mu, \beta}^{-1} \, x)^{-3/2}, \quad x \in \SS, \quad (\mu, \beta) \in \SS \times (0,\infty), \]
where $\Sigma_{\mu,\beta} = Q_\mu^\top D_\beta Q_\mu$, with  $D_\beta = \text{diag}(1,1,\beta^{-2})$ and $Q_u$ is the rotation matrix mapping $(0,0,1)^\top$ onto the unit vector $\mu \in \SS$, given by  
\[ Q_\mu = \begin{pmatrix} \cos{\theta_\mu}\cos{\phi_\mu} & -\sin{\phi_\mu} & \sin{\theta_\mu}\cos{\phi_\mu} \\ 
                        \cos{\theta_\mu}\sin{\phi_\mu} & \cos{\phi_\mu} & \sin{\theta_\mu}\sin{\phi_\mu} \\
                        -\sin{\theta_\mu} & 0 & \cos{\theta_\mu}\\ 
                        \end{pmatrix},
\]
and $(\theta_\mu, \phi_\mu)$ is the spherical coordinate representation of $\mu$.  
For the original PR estimator, \citet{dixitmartin2022} treated $\beta$ as a fixed unknown structural parameter, not a latent variable being mixed over.  That is, they treated the kernel as $k_\beta(x \mid u)$, depending on the unknown $\beta$, where $u=\mu$ is the only latent variable mixed over.  Then they employed the PR marginal likelihood strategy to estimate the fixed unknown $\beta$.  Here, using the added flexibility of the PRticle filter approximation, we fit the model that mixes over latent variable $u=(\mu, \beta)$, so that there are no unknown structural parameters to be estimated separately.  Here we generate $n=2000$ samples from the above mixture model where the true mixing distribution $P$ has a smooth bimodal density in $\mu$ and a point mass at $\beta=0.1$---this means that PR's mixture model, that takes $\beta$ fixed and unknown, is correctly specified while the PRticle filter's mixture model is misspecified.  
For the PR estimator, we take $w_i = (i+1)^{-1}$ and $P_0$ to be uniform on $\SS$.  For the PRticle filter approximation, which mixes over both $\mu$ and $\beta$, the initial guess $P_0$ is a product of uniform distributions on $\SS$ and a uniform distribution on $(0,0.5]$.  
Plots of the PR estimate $m_n$ and PRticle approximation $\hat m_n$, based on $T=1000$ initial particles, are provided in Figure \ref{fig:sphere.prticle}. The approximation based on PRticle filter clearly captures all the relevant features of the PR estimate, and in much less time thanks to not needing to employ the marginal likelihood strategy to estimate a fixed $\beta$. 
\end{ex}

\begin{figure}[t]
    \centering
    \subfloat[\centering PRticle estimate, north pole
    ]{
        \includegraphics[width=7 cm]{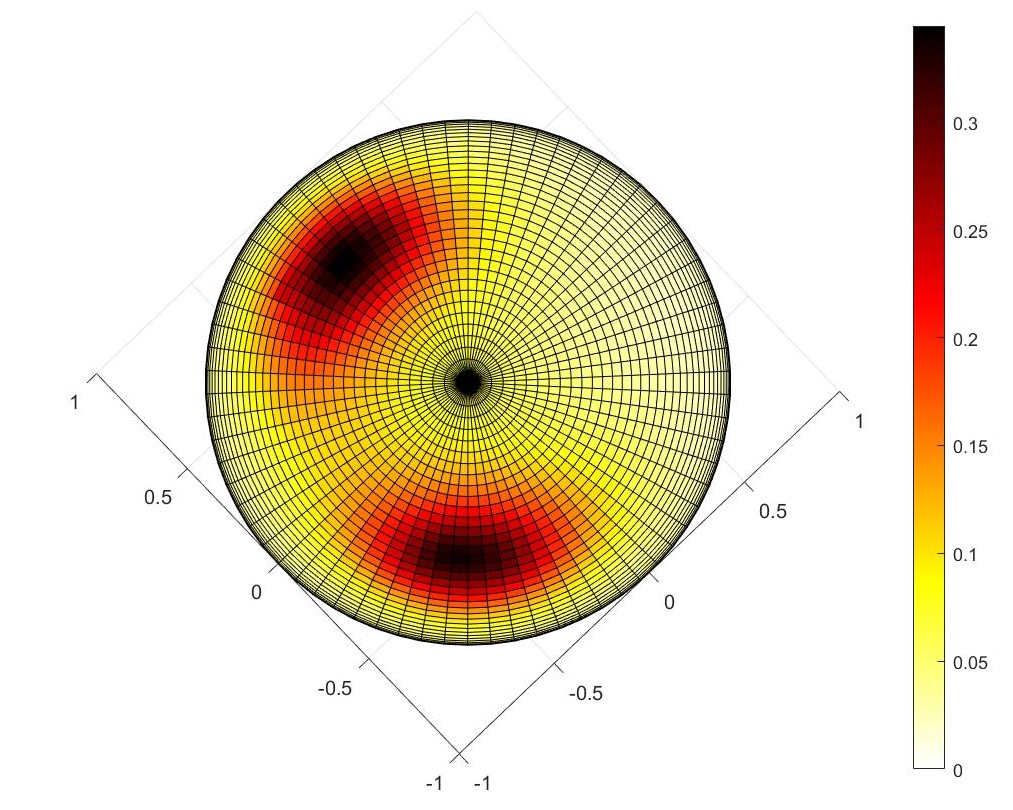}
     }
    \subfloat[\centering PRticle estimate, south pole
    ]{
        \includegraphics[width=7 cm]{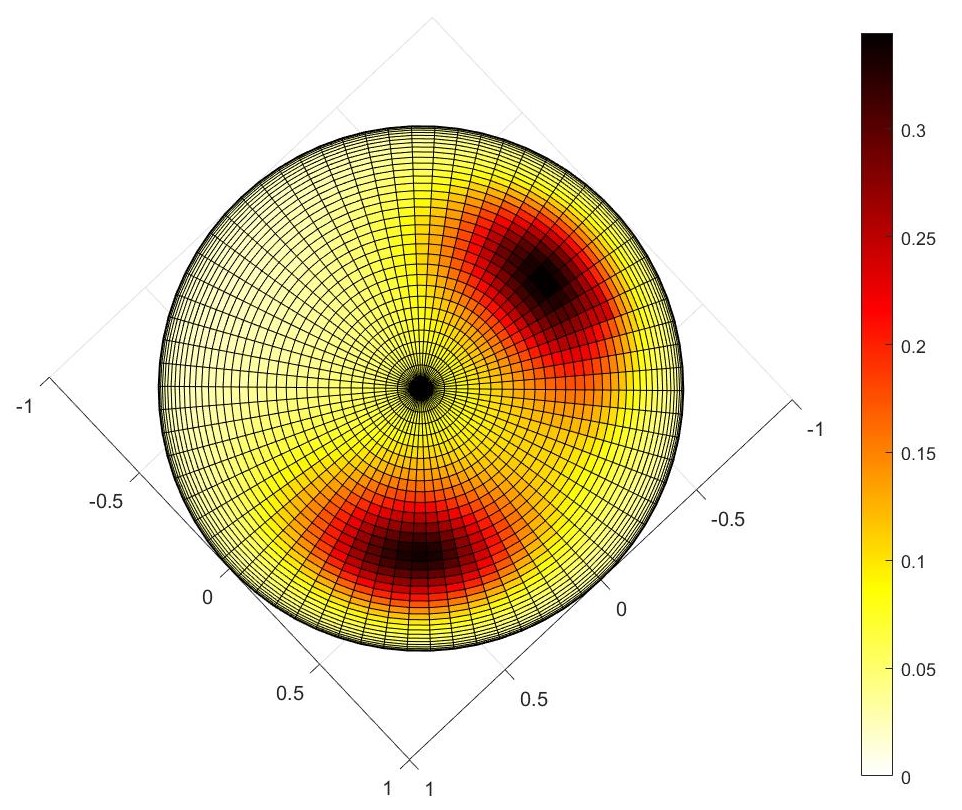}
     }
     \hspace{0mm}
    \subfloat[\centering PR estimate, north pole
    ]{
        \includegraphics[width=7 cm]{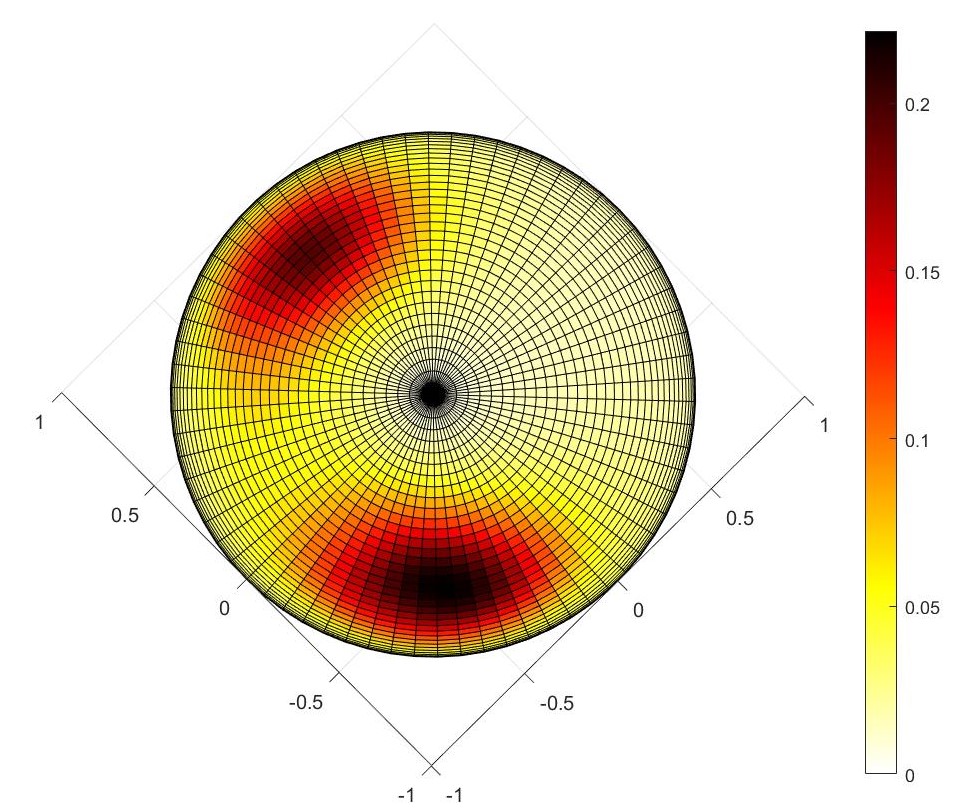}
     }
     \subfloat[\centering PR estimate, south pole
     ]{
        \includegraphics[width=7 cm]{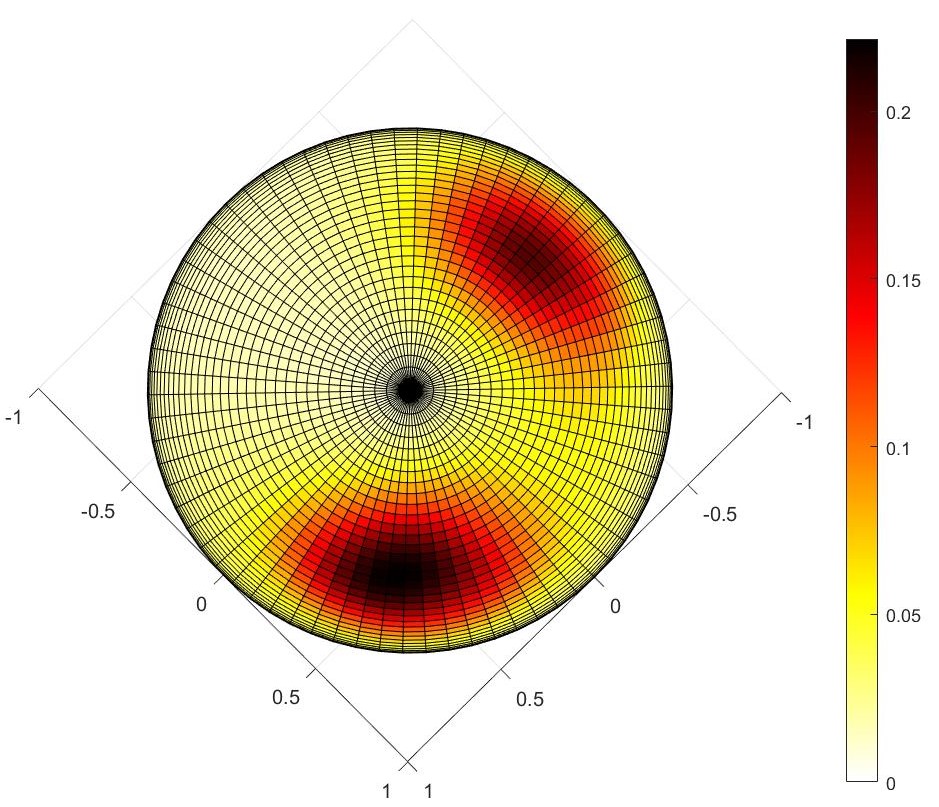}
     }
    \caption{Estimated mixture density on the sphere based on the PR algorithm and the PRticle filter approximation for bimodal continuous mixing distribution, views from north and south poles.}
    \label{fig:sphere.prticle}
\end{figure}


\begin{ex}
For the third part of the simulation study, we mix a bivariate normal kernel over all mean $(\mu_1, \mu_2)$ and covariance $(\sigma_1^2, \sigma_2^2, \rho)$ parameters. This means that the mixing distribution $P$ is defined over five variables. Using PR with numerical integration is not possible in this situation as a quadrature scheme is infeasible. The PRticle filter approximation can instead be used to fit this mixture density. For comparison, we consider a Dirichlet process mixture model fit, where the prior for the mixing distribution is $P \sim {\sf DP}(\alpha, P_0)$, a Dirichlet process with precision parameter $\alpha > 0$ and base measure $P_0$, which we take to be the same as PR's initialization (see below). The Dirichlet process mixture model estimate of the mixture density is the corresponding posterior mean, which is calculated using the {\tt DirichletProcessMvnormal} function in the R package {\tt dirichletprocess} \citep{rossmarkwick2018} with 1000 iterations. To compare the two approaches we take $U=(\mu_1, \mu_2, \sigma_1^2, \sigma_2^2, \rho)$ with the true mixing distribution $P$ corresponding to independent $\mu_1 \sim \nm(5, 3^2)$, $\mu_2 \sim \nm(10, 3^2)$, $\sigma_1^2 \sim \gam(1,1)$, $\sigma_2 \sim \gam(5,1)$, and $\rho \sim \bet(10, 5)$. 
In this, we generate $n = 500$ observations from the true mixture density and fit a multivariate normal mixture density using the PRticle filter approximation and the Dirichlet process mixture model machinery. As before we initialize the PRticle filter with a uniform distribution $P_0$ over all parameters and a weight sequence $w_i = (i+1)^{-1}$. However, to avoid possible attrition we improve the filter by using the strategy proposed in Section~\ref{attrition} and rerun the algorithm with an updated $P_0$. Contour plots of the estimated mixture densities are given in Figure~\ref{fig:MVN1}. The PRticle filter approximation plots are able to capture the structure of the true mixture density, $m$, just like the Dirichlet process mixture model fit. For a numerical comparison we calculate the Monte Carlo approximation of the Kullback--Leibler divergence between the true mixture density and the estimated density. This is $0.024$ for a comparison between $m$ and $\hat m_{PR}$ while it is $0.006$ for a comparison between $m$ and $\hat m_{DP}$. The Dirichlet process estimate performs slightly better than PR for mixture density estimation, but it is important to note that PR is solving the harder problem of estimating a multivariate mixing density, which the Dirichlet process mixture formulation struggles with because the resulting estimator is effectively discrete. To illustrate this, we draw independent samples of $U$ from the true mixing distribution $P$ and both the PRticle filter and Bayes estimates of $P$, and display quantile--quantile plots for comparison in Figure~\ref{fig:MVNmixing}. The PR quantiles match the true distribution quantiles much more closely compared to the Dirichlet process-based Bayes estimator quantiles.  Computationally, fitting of the Dirichlet process mixture takes almost four minutes on our machine, while the PRticle filter approximation is calculated in about one minute.
\end{ex}


\begin{figure}[t]
\centering
\subfloat[\centering DP mixture estimate]{
  \includegraphics[width=8 cm]{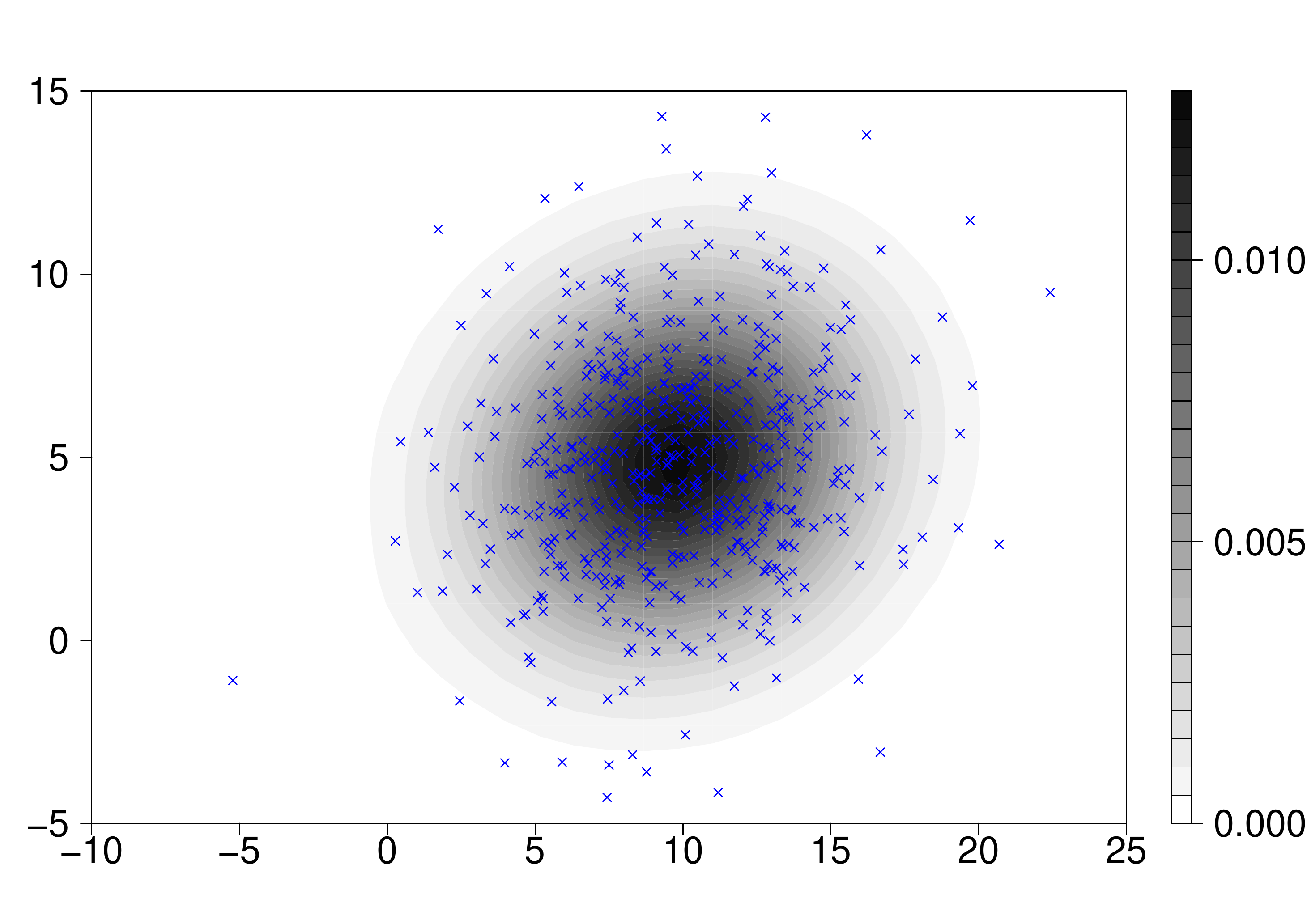}
}
\subfloat[\centering PRticle estimate with only one run ]{
  \includegraphics[width=8 cm]{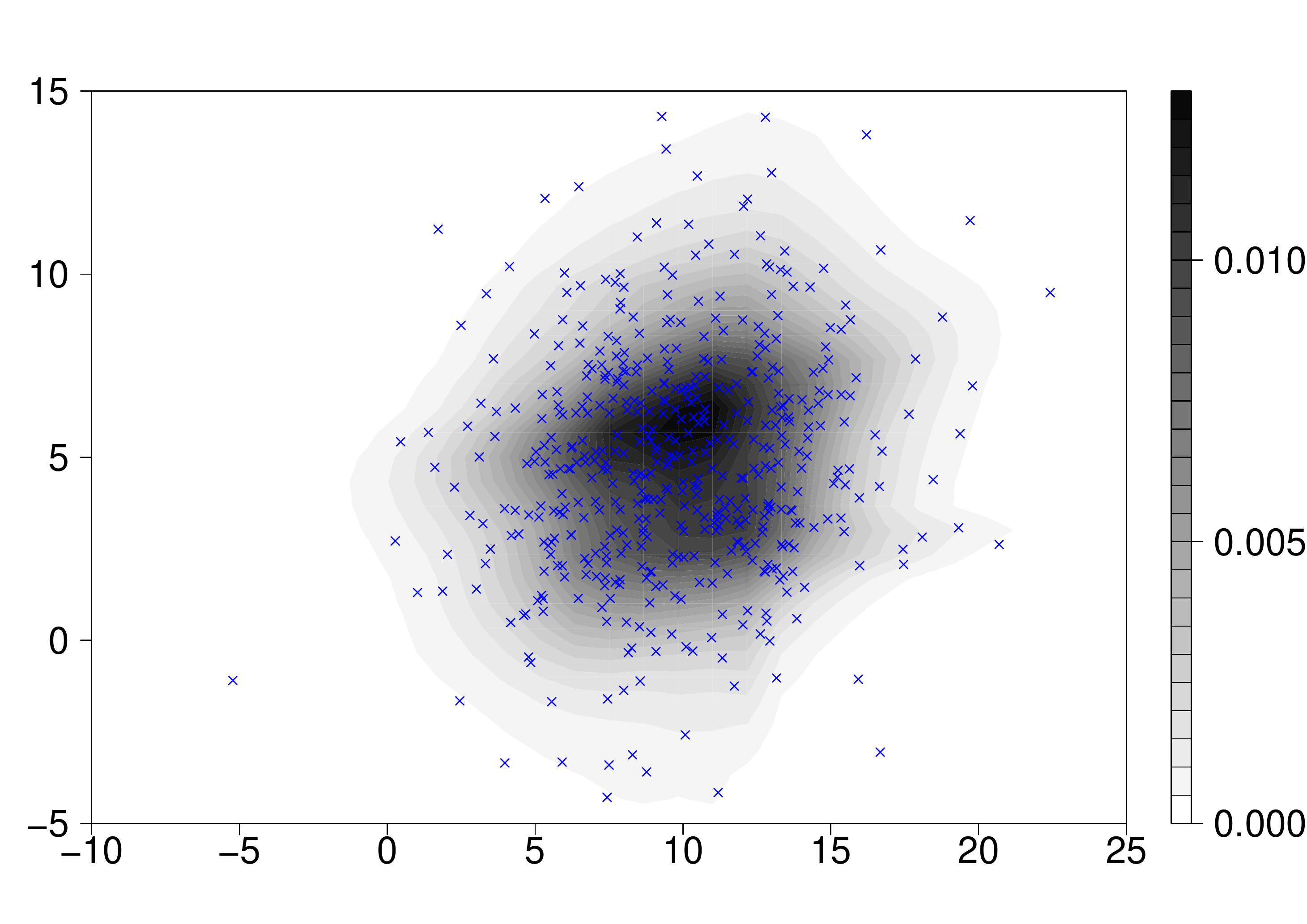}
}
\hspace{0mm}
\subfloat[\centering PRticle estimate with updated $p_0$]{
  \includegraphics[width=8 cm]{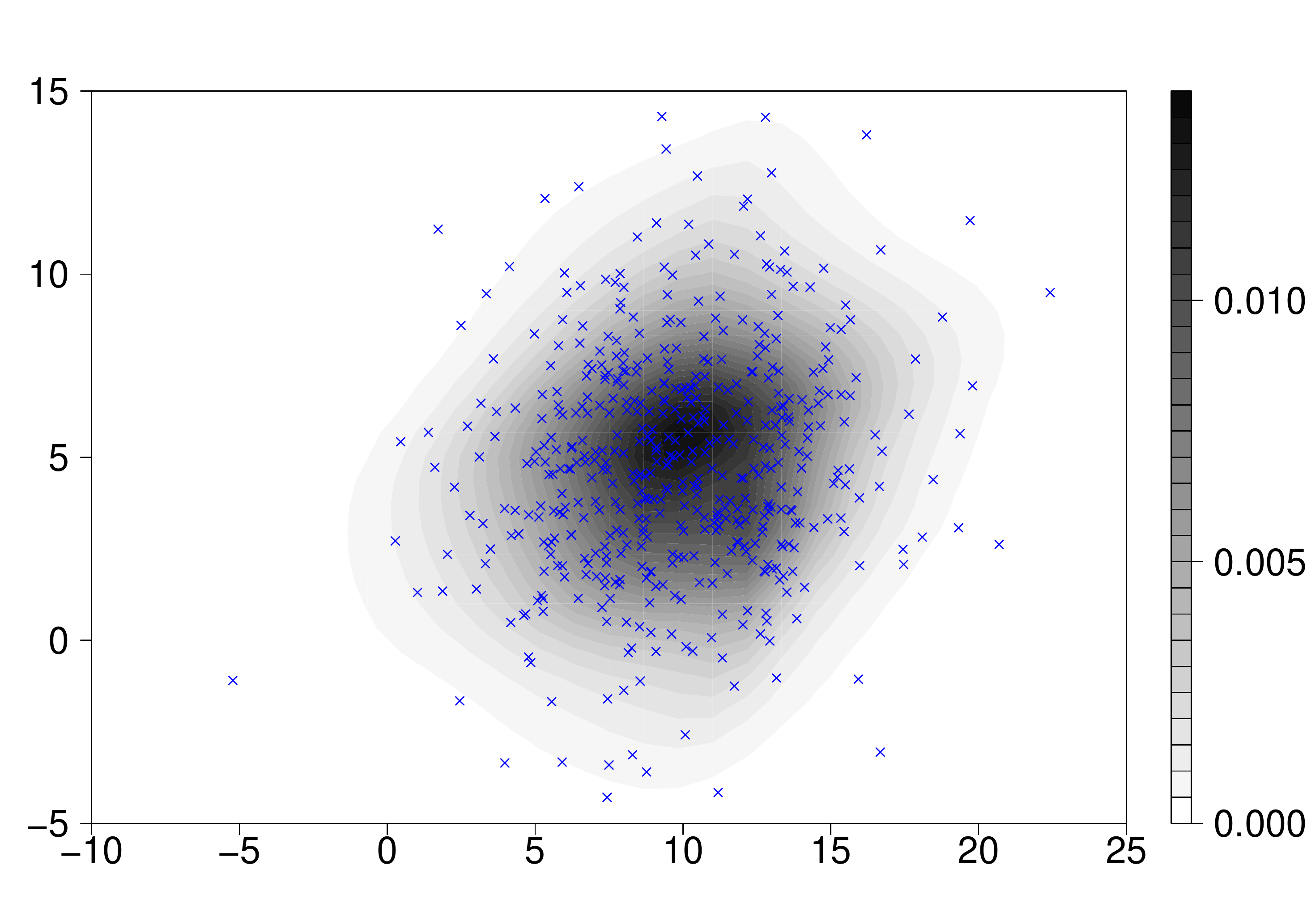}
}
\caption{Mixture density estimates for a multivariate normal mixture with the observed data overlaid.}
\label{fig:MVN1}
\end{figure}

\begin{figure}[t]
\centering
\subfloat[\centering $\mu_1$]{
  \includegraphics[width=5cm]{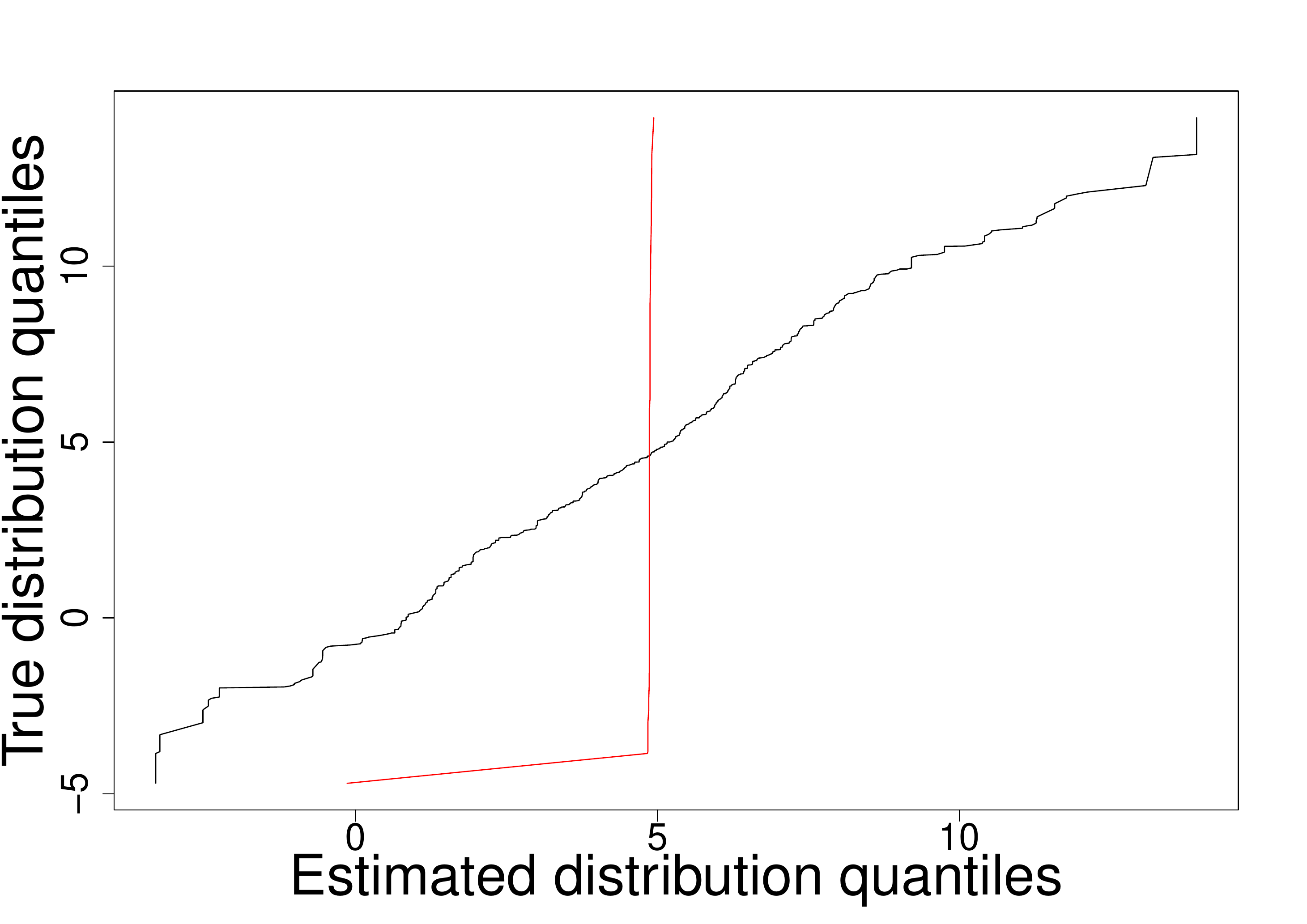}
}
\subfloat[\centering $\mu_2$]{
  \includegraphics[width=5cm]{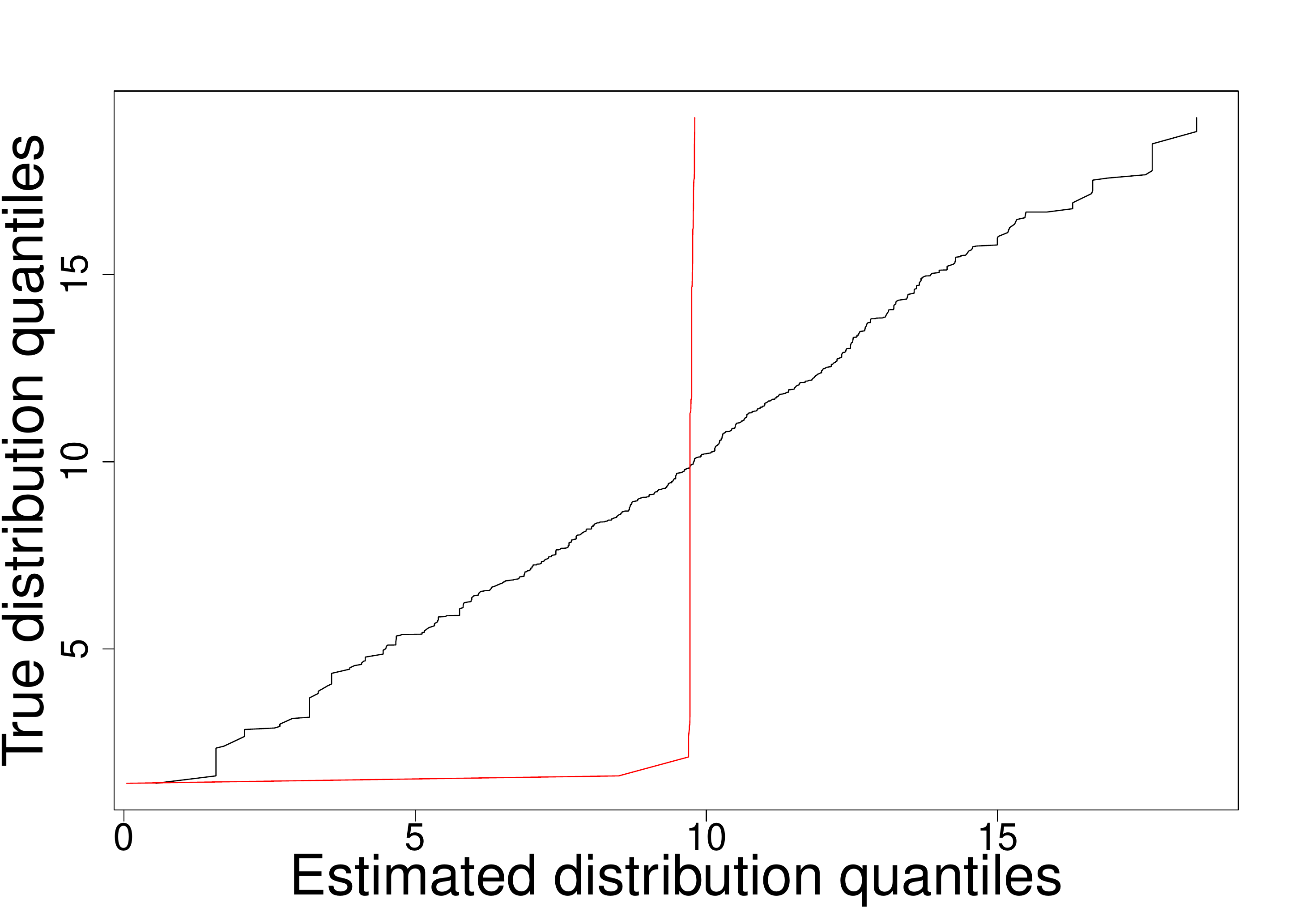}
}
\hspace{0mm}
\subfloat[\centering $\sigma_1^2$]{
  \includegraphics[width=5cm]{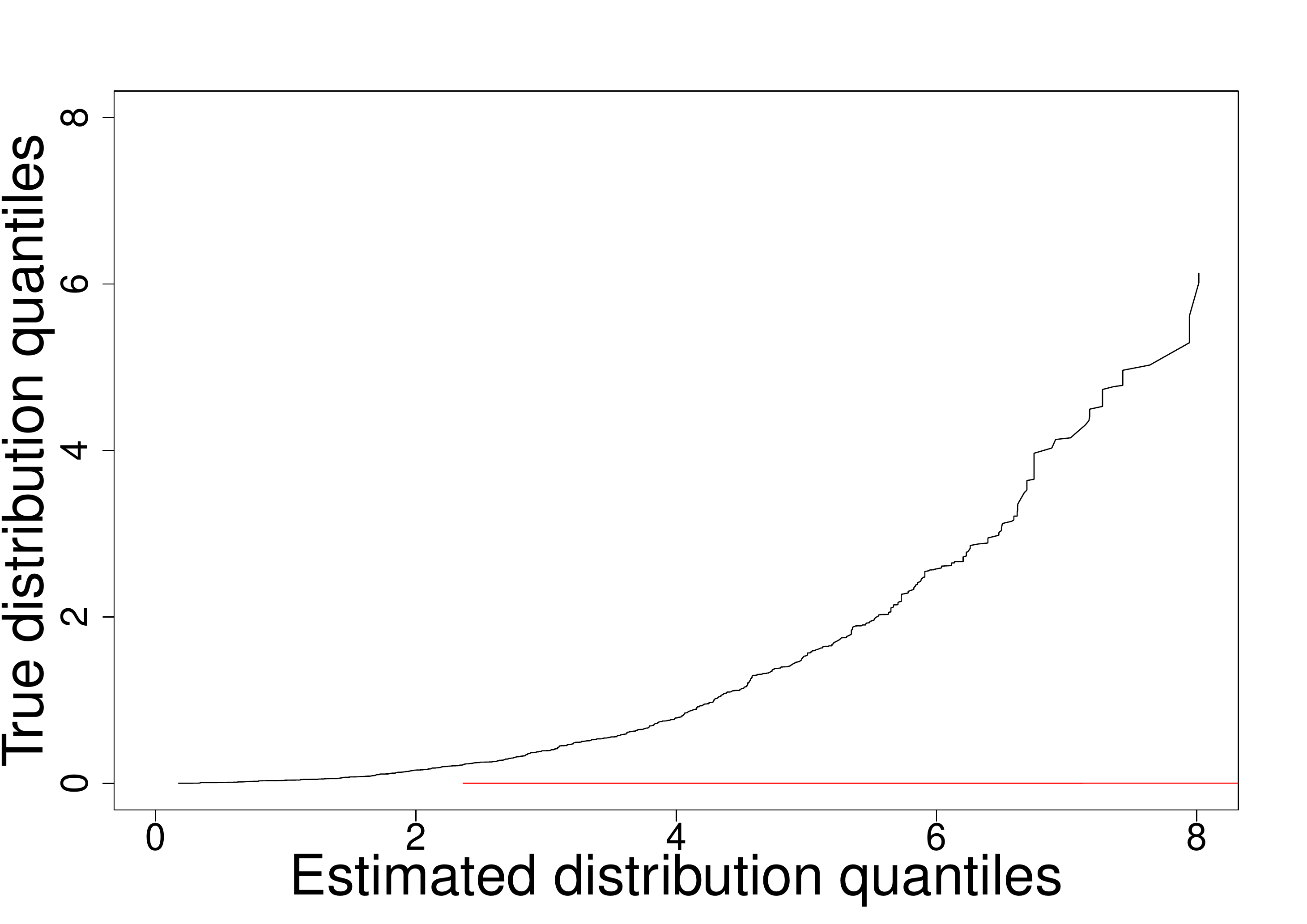}
}
\subfloat[\centering $\rho$]{
  \includegraphics[width=5cm]{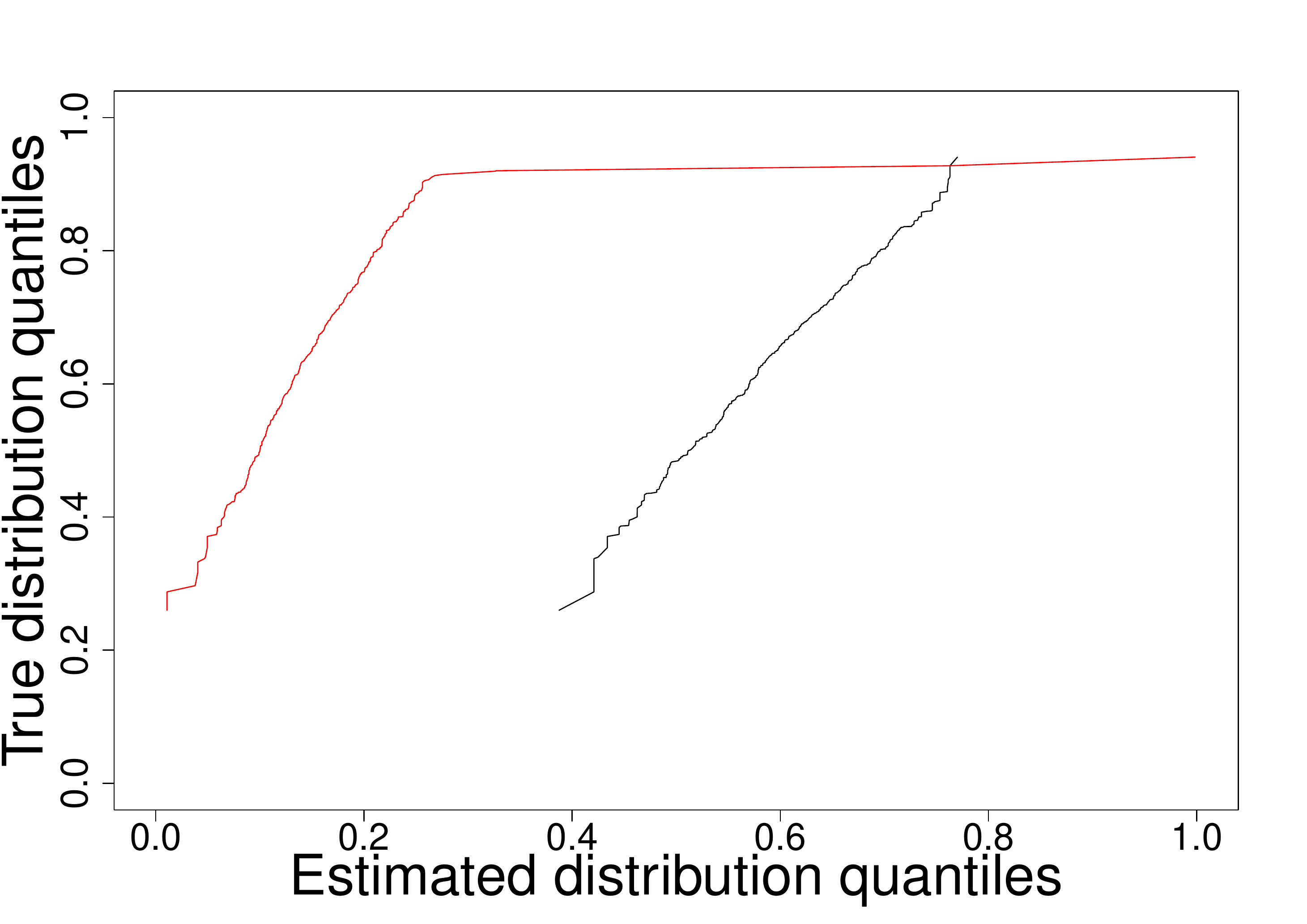}
}
\subfloat[\centering $\sigma_2^2$]{
  \includegraphics[width=5cm]{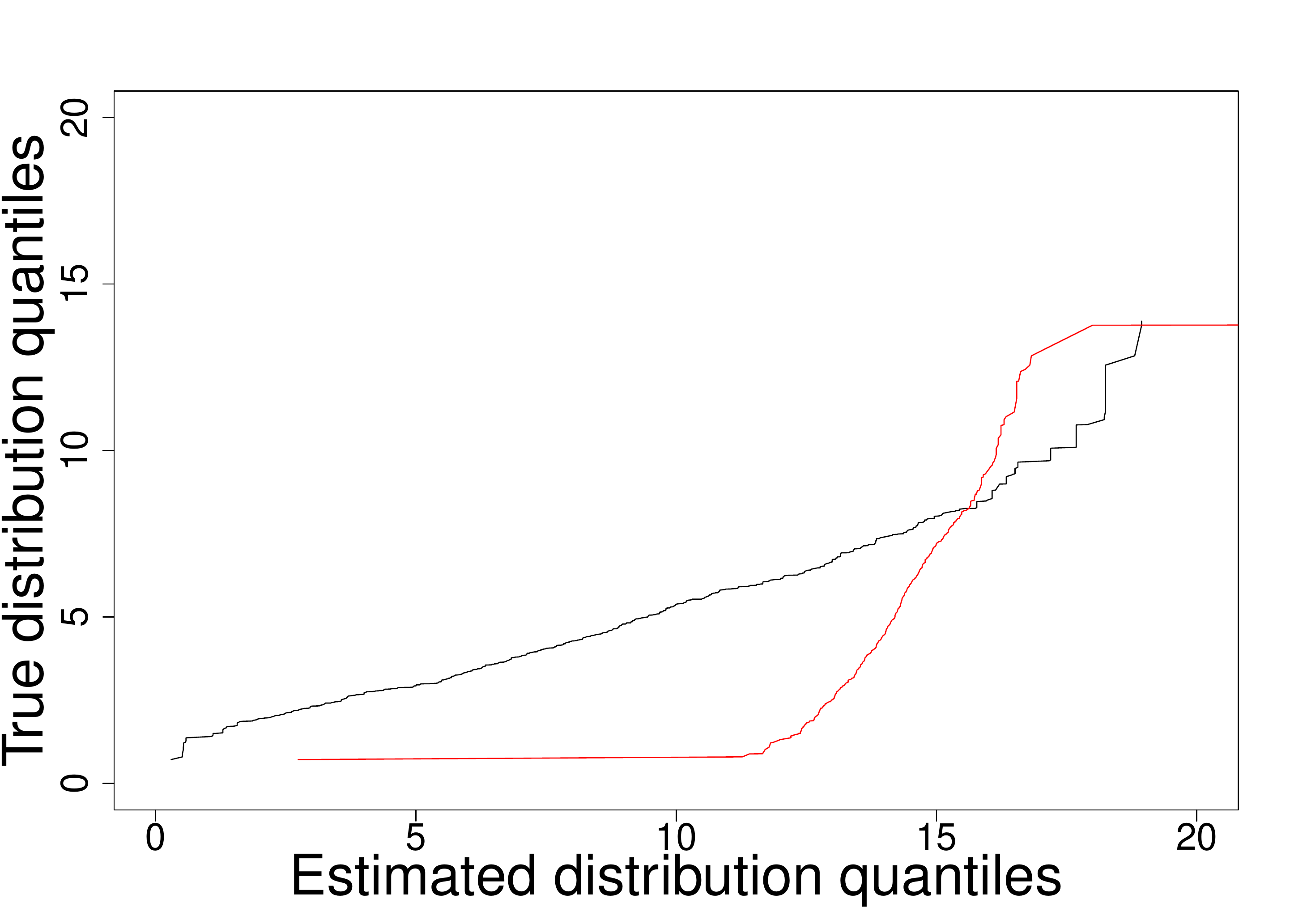}
}
\caption{Quantile--quantile plots for each component of $U=(\mu_1, \mu_2, \sigma_1^2, \sigma_2^2, \rho)$ of the mixing distribution corresponding to the multivariate normal mixture. Black line corresponds to quantiles from the PRticle filter estimate, while the red line corresponds to quantiles from the Dirichlet process-based Bayes estimate.}
\label{fig:MVNmixing}
\end{figure}


\subsection{Marked point process modeling}
\label{SS:marks}

Here we showcase an interesting application of multivariate mixture modeling using PR, which is made possible by the PRticle filter approximation. Suppose our data consists of spatial locations $s$ of an interesting occurrence possibly accompanied by some attributes $x$ at those locations. Typically, when only location observations $s_1, \ldots, s_n$ are available, there is interest in the intensity of the incident occurrence. These are typically modeled as realizations from a non-homogeneous Poisson process with intensity function $\lambda(s)$, $s \in \SS \subseteq \RR^d$ \citep[e.g.][]{liangetal2008}. For example, in an epidemiological study, $s_i$ might be the geographic location of the $i^\text{th}$ individual showing symptoms of a particular disease and hence there is interest in modeling the intensity of the disease occurrences. 
For such a non-homogeneous Poisson process, a likelihood function can be written as,
\[ L(\lambda \mid s_1,\ldots,s_n) = \Lambda^n \exp\{ -\Lambda \} \prod_{i=1}^{n} m(s_i)\]
where $m$ is the normalized intensity function, i.e., $m(s) = \lambda(s) / \Lambda$, and $\Lambda = \int \lambda(t) \, dt$. Given the separable nature of the likelihood above, $\Lambda$ and $m$ can be estimated separately. A regression approach is to model $\lambda$ by a log Gaussian Cox process \citep[ e.g.][]{liangetal2008}. However given the nonparametric nature of the problem it is desirable to use a robust model for $\lambda$ to capture all the shape/scale features of the function. Mixture models offer this flexibility and an approach to modeling $\lambda$ or $m$ by a Dirichlet process mixture was proposed in \citet{kottassanso2007}.

Additionally, there could be other attributes $X_1, \dots, X_n$ present with the location data, for example, indicator variable for type of disease, when there is interest in the association between disease locations. Then to account for this association and its effect on the model, a joint intensity function $\psi(s, x)$ can be defined. The resulting process is known as the marked point process, where the attributes are called {\em marks}. The nonparametric mixture density in \eqref{eq:mix} offers the required flexibility to model a fully nonparametric function $\psi(s,x)$. \citet{taddykottas2012} propose mixture models for such marked point processes using conditionally conjugate Dirichlet process mixture models. The idea is to model the joint intensity $\psi(s, x)$ of the locations $s$ and marks $x$ as,
\begin{equation}\label{eq:jointmark}
\psi(s, x) = \lambda(s) \, g(x \mid s) = \Lambda \, m(s) \, g(x \mid s) = \Lambda \, m(s, x),
\end{equation} 
where $g(x \mid s)$ represents the conditional density of mark $X$, given location $s$. Features of this joint intensity can be identified by modeling $m(s, x)$ with a mixture model. The flexibility and computational efficiency offered by PR means that it is tailor-made to
fit such a mixture model. However, given the multivariate nature of the problem we need the PRticle filter approximation to actually implement PR.

We illustrate the above on a real dataset as suggested in Example~5.3 of \citet{taddykottas2012}. The suggested dataset, {\tt longleaf} is part of the R package {\tt spatstat} \citep{bt2005} and a detailed space-time survival analysis based on this was developed in \citet{rc1994}. The observations are locations of 584 pine trees in a $200 \times 200$ square and the marks are diameters of the trees at breast height (only for trees having this diameter greater than 2 cm). A scatter plot of the data is given in Figure~\ref{fig:scatter}. One can clearly see that the distribution of trees is not uniform, i.e., mature (larger diameter) trees are more evenly distributed than younger (smaller diameter) trees, which appear in clusters. Hence, the goal is to model the joint intensity of the locations and marks of these trees. \citet{taddykottas2012} model $m(s, x)$ as a mixture model with a trivariate normal kernel and a mixing distribution defined over all the parameters of this multivariate normal distribution, i.e,
\begin{equation}\label{eq:marked}
    m(s, x) = \int \frac{\nm_3 \bigl(\text{logit} \, (s_1/200, s_2/200), \, \log (x-2) \mid \mu, \Sigma \bigr)}{(x-2)\prod_{i=1}^{2} (s_i / 200) (1 - s_i/200)} \, P(d\mu, d\Sigma)
\end{equation}
With the model in \eqref{eq:marked}, we can estimate the conditional distribution of the marks at different locations to capture the varying distribution of trees, which in essence is an indication of the survival. We propose using the PR approach to fit this joint intensity function and estimate $P$.  Of course, that this is a mixture of a nine-dimensional latent variable space---three mean parameters and six covariance matrix parameters---makes it impossible to fit with the PR algorithm directly, so the PRticle filter approximation is necessary. Assuming a mixing distribution over all nine dimensions is possible using the PRticle filter approximation, but for model comparison we actually fit two models: the nine-dimensional model above and a reduced model that assumes the covariance terms in $\Sigma$ are fixed at 0.
The mixing distribution is then estimated by PR with the PRticle filter approximation. From this fitted mixture model $m(s, x)$ we extract the conditional density $g(x \mid s)$ at specific locations to see how the diameter distribution varies with $s$ as displayed in Figure \ref{fig:marked}. As we can see in the scatter plot, each chosen location has unique characteristics in terms of diameter distribution. Locations $s = (81, 120)$ and $s = (100,100)$ have higher concentrations of mature, large-diameter trees, which is correctly captured by both models (nine-dimensional and six-dimensional) in Figure \ref{fig:marked}. On the other hand, locations $s = (105, 140)$ and $s = (185, 87)$ have clusters of younger, smaller-diameter trees which, again, is correctly captured in Figure~\ref{fig:marked}. 
Each plot in Figure \ref{fig:marked} is overlaid with an empirical probability density of marks using the {\tt density} function in {\tt R} based on observations that are within a radius of 30 units from the chosen location.
The fitted model retains these local features  while being globally smoother than the empirical density. In terms of model comparison, both the six and nine-dimensional model reasonably capture the varying diameter distribution at all locations. A difference between the two estimates is that the full model estimate is smoother than the reduced model one. This is because the kernel density in the former inherently contains an average over the covariance parameters, while the latter fixes these at zero.  The full model also appears to capture certain features better than the reduced model.  For example, consider the locations $s=(105,140)$ and $s=(185,87)$, whose conditional mark density is shown in Panels (c) and (d) of Figure~\ref{fig:marked}, respectively.  These two points have relatively high concentration of small-diameter trees, as seen in Figure~\ref{fig:scatter}; but upon closer inspection, the concentration at $s=(105, 140)$ seems higher than at $s = (185, 87)$, and we see that the conditional density estimates based on the full model capture these differing features better than those based on the reduced mode.  
Similar results were obtained in \citet{taddykottas2012} via their proposed Dirichlet process mixture fit. An interesting difference between our results and those of Taddy and Kottas is that their plot at $s=(100,100)$ shows a sharp spike in the conditional density near $x=0$, whereas ours does not.  Since there is no evidence in the scatter plot for a high concentration of small-diameter trees, our guess is that their spike is actually a boundary effect, commonly seen in density estimation on bounded domains, and not an inherent feature in the data.  That the PR estimate does not suffer from a boundary effect in this case is another benefit.

\begin{figure}[t]
        \centering
        \includegraphics[width=13 cm]{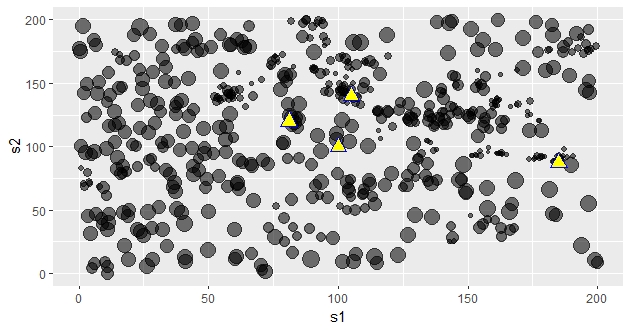}
        \caption{Tree locations on a $200 \times 200$ grid ({\tt longleaf} dataset), where the size of each point is proportional to the respective tree diameter; gray coloring is to make different points easier to distinguish. Yellow triangles indicate locations at which the conditional mark density is estimated in Figure \ref{fig:marked}.}
        \label{fig:scatter}
\end{figure}     

\begin{figure}[t]
    \centering
    \subfloat[\centering $s=(81, 120)$]{
    \includegraphics[width=7.5 cm]{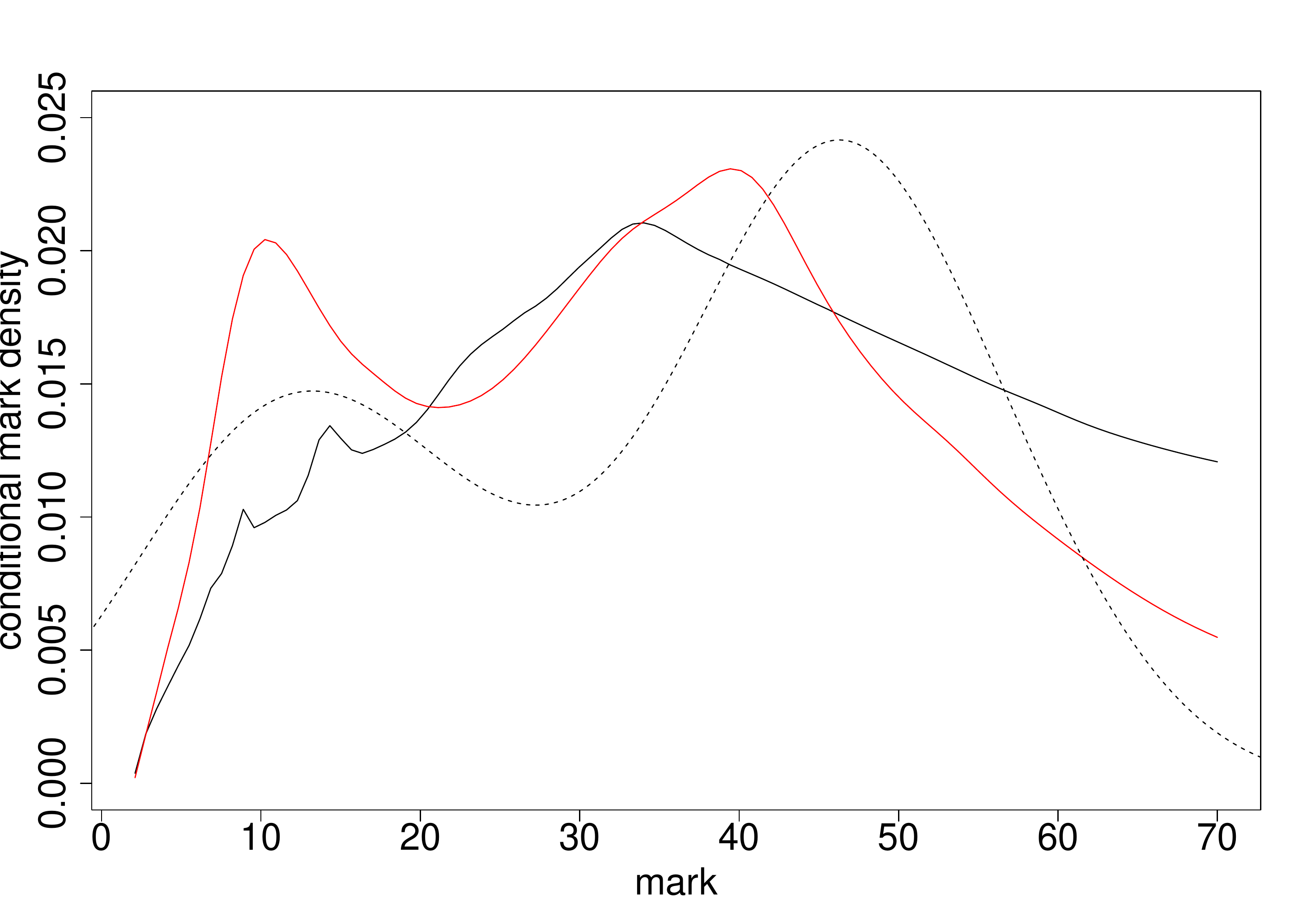}
    }
    \subfloat[\centering $s=(100, 100)$]{
    \includegraphics[width=7.5 cm]{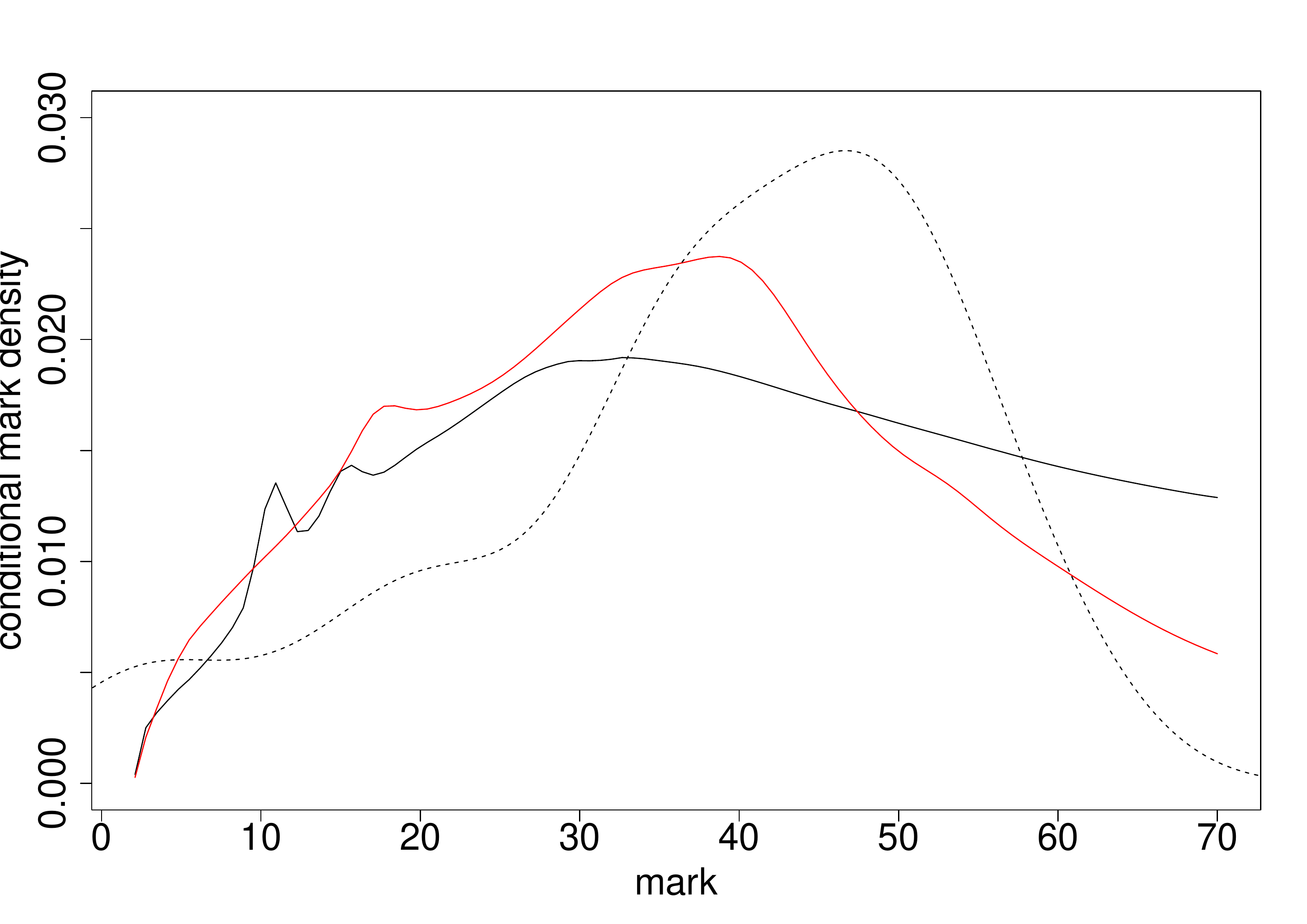}
    }
    \hspace{0mm}
    \subfloat[\centering $s=(105, 140)$]{
    \includegraphics[width=7.5 cm]{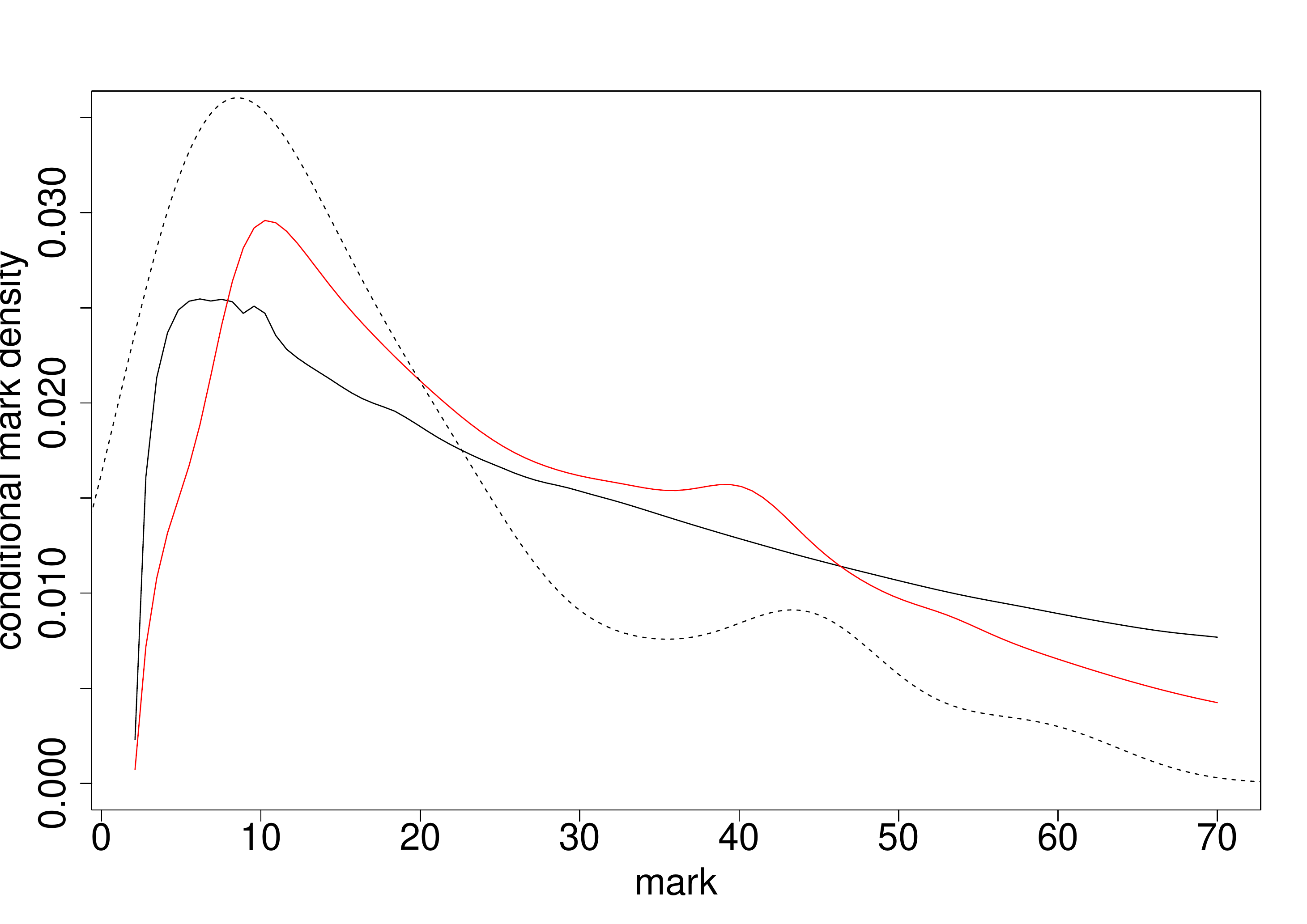}
    }
    \subfloat[\centering $s=(185, 87)$]{
    \includegraphics[width=7.5 cm]{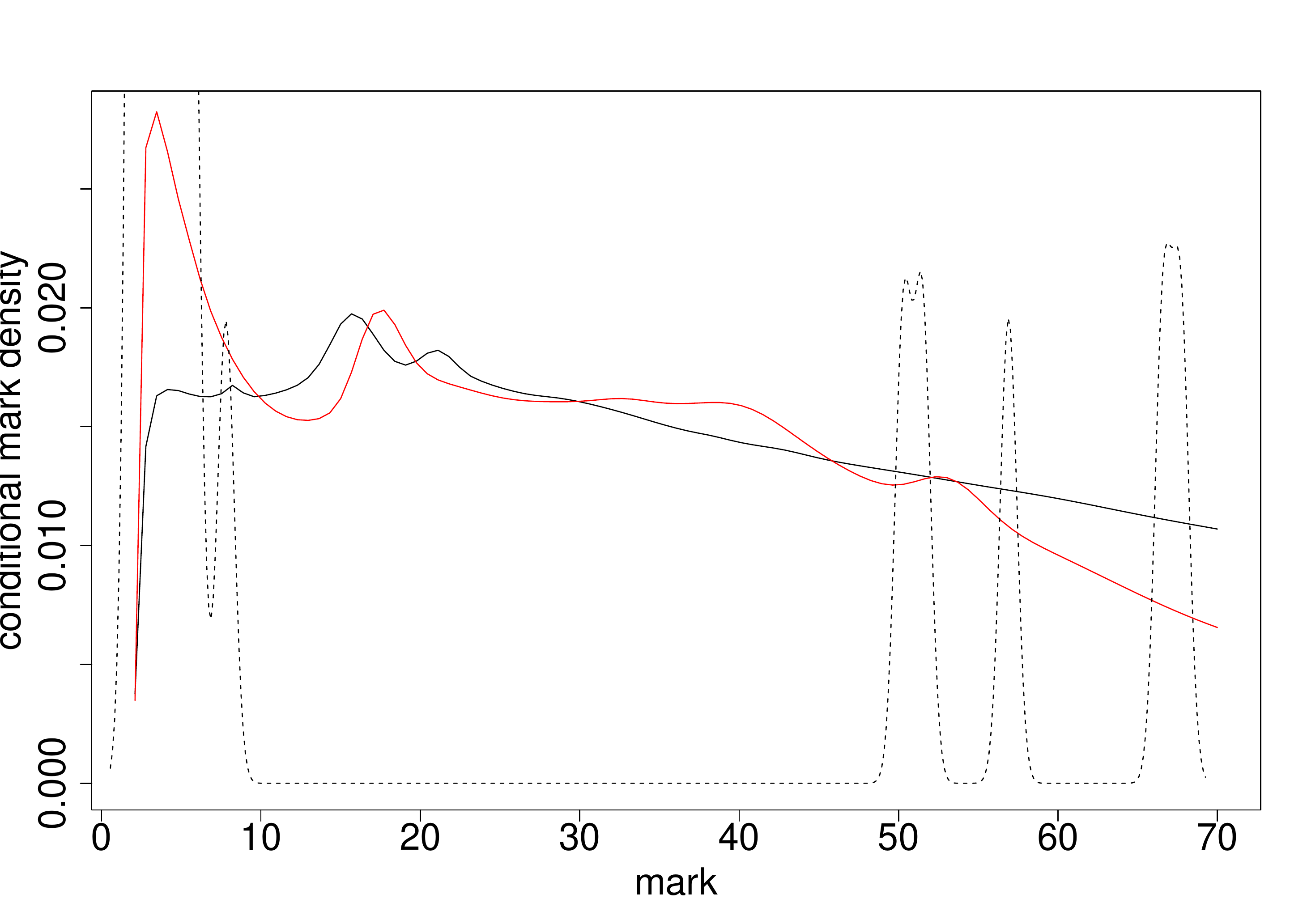}
    }
    \caption{Conditional density estimates for the marks, i.e., the diameter of trees in the {\tt longleaf} dataset at four specific locations, $s$, in the $200 \times 200$ grid with full nine-dimensional (black), reduced six-dimensional (red) mixture model overlayed with an empirical distribution of marks in the neighborhood (dashed)}
    \label{fig:marked}
\end{figure}






\section{Conclusion}
\label{S:discuss}


In this paper we proposed a new filtering mechanism, a PRticle filter, for fitting nonparametric mixture models using the PR algorithm in multivariate problems.  This new development is an important addition because, previously, the PR algorithm could only handle mixtures over relatively low-dimensional spaces.  This contribution creates new opportunities for PR-based methodology in non-trivial problems like marked spatial point process modelling in Section~\ref{SS:marks}. Theoretically, we show that the PRticle filter approximation of the mixing distribution converges to the PR estimate in a strong sense as the number of particles $T$ goes to infinity, when the data $X_1,\ldots,X_n$ of size $n$ remains fixed. This holds for the primary PR run only, an analysis of the attrition-handling embellishments in Section~\ref{attrition} would require more sophisticated techniques.  Coupling this with results in literature on consistency (as $n \to \infty$) of the PR estimator strengthens both the theoretical and practical aspects of PR. Our numerical results show that the PRticle filter approximation gives as accurate results as the traditional PR approach for univariate and bivariate mixtures and is also effective in estimating a multivariate mixture density. 

One might also be interested in quantifying uncertainty about the mixing distribution and its features, like in Section~\ref{SS:marks}. Capturing the variability in the PR estimate is a difficult problem, but suggestions have been made in \citet{fortinipetrone2020} and \citet{dixitmartin2019}. The former uses a quasi-Bayes strategy to construct credible intervals for the PR estimate, while in the latter we leverage the order dependence of the PR estimator for uncertainty quantification. This strategy, which constructs multiple PR estimates based on distinct permutation of the data sequence, would be applicable for the PRticle filter approximation. There are some theoretical gaps that need to be filled, however, so remains an ongoing work. 

One of our numerical illustrations considered nonparametric estimation of mixing distributions supported on the sphere in three-dimensions.  A natural question is if this approach could be extended to other cases involving mixture defined on more general compact manifolds, e.g., higher-dimensional spheres, tori, etc. All that would be needed to extend the proposed strategy in such cases is a map from the surface of the manifold to an underlying Euclidean space where the comptutations can be carried out.  In the special case of the sphere, there is a ``global'' Euclidean-space representation but, for more general manifolds, the corresponding Euclidean spaces would be ``local,'' which creates some new and interesting conceptual and computational challenges.  



An interesting theoretical question is if consistency of the PRticle filter approximation could be established.  That is, if $\hat P_{n,T}$ is the PRticle filter approximation of the PR estimator $P_n$, then the goal would be to show that $\hat P_{n,T} \to P$ as both $n$ and $T$ go to infinity.  Of course, this would require $T=T_n$ to be increasing sufficiently fast with $n$.  Direct extension of the argument used in the proof of Theorem~\ref{thm:limit} may be possible using some naive techniques, e.g., the classical union bound, but, if successful, this would require $T$ to be exponentially large with $n$.  Our gut feeling is that such a large number of particles would not be necessary, so some important insights are still missing.  We save this as a topic for future work.  


A remaining practical challenge is the handling of attrition when the dimension of the mixing distribution support is relatively high.  What we proposed in Section~\ref{attrition} is able to adequately control attrition rates for mixtures over at least nine-dimensional spaces.  We have not thoroughly tested the performance of the PRticle filter approximation in dimensions higher than this, but we fully expect that controlling the attrition rate will be more and more difficult as the dimension increases.  This is not a limitation of the proposed method, it is a challenge that any importance sampling-based method will face in high-dimensional applications.  New insights would be needed to make this leap to high-dimensional mixtures but it may be possible to take advantage of the PR-specific recursive structure that we used to develop the PRticle filter approximation here.  


\section*{Acknowledgments}

The authors thank the three anonymous reviewers for their helpful feedback on an earlier version of this manuscript.  This work was supported by the U.S.~National Science Foundation, grant DMS--1737929.

\appendix

\section{Proof of Theorem~\ref{thm:limit}}
\label{proofs}

Recall that,
\[ \hat m_{i-1}(X_i) = \frac{1}{T}\sum \limits_{t=1}^{T} k(X_i \mid U_t) \, \hat \Delta_{i}(U_t), \quad i \geq 1, \]
where $\hat \Delta_1(u) \equiv 1$ and 
\begin{align*}
\hat \Delta_{i}(u) & = \hat \Delta_{i-1}(u) \, \hat\delta_{i-2}(u) \\
& = \prod\limits_{j=2}^{i} \left \{ 1 + w_{j-1} \left ( \frac{k(X_{j-1} \mid u)}{\hat m_{j-2} (X_{j-1})} - 1 \right ) \right \}, \quad i \geq 2. 
\end{align*}
By the strong law of large numbers, we have that 
\[ \hat m_{0,T}(X_1) = \frac{1}{T} \sum_{t=1}^T k(X_1 \mid U_t) \to m_0(X_1), \quad \text{with $P_0$-probability~1 as $T \to \infty$}. \]
To prove a similar claim for all $\hat m_{\ell,T}(X_{\ell+1})$, we proceed by induction.  That is, we start by assuming that 
\begin{equation}
\label{eq:induction}
\hat m_{i-1,T}(X_i) \to m_{i-1}(X_i) \quad \text{with $P_0$-probability~1, for all $i \leq \ell$}, 
\end{equation}
and then use that assumption, along with the structure of the algorithm, to prove  
\[ \hat m_{\ell,T}(X_{\ell+1}) \to m_\ell(X_{\ell+1}), \quad \text{with $P_0$-probability~1, as $T \to \infty$}. \]
Towards this, we have 
\begin{align*}
\hat m_{\ell,T}(X_{\ell+1}) & = \frac1T \sum_{t=1}^T k(X_{\ell+1} \mid U_t) \, \hat \Delta_{\ell + 1}(U_t) \\
& = \frac1T \sum_{t=1}^T k(X_{\ell+1} \mid U_t) \prod_{i=1}^{\ell} \Bigl\{ (1 - w_i) + w_i \frac{k(X_{i} \mid U_t)}{\hat m_{i-1,T}(X_i)} \Bigr\}.
\end{align*}
The product above can be expanded as
\[ \prod_{i=1}^{\ell} \Bigl\{ (1 - w_i) + w_i \frac{k(X_{i} \mid U_t)}{\hat m_{i-1,T}(X_{i})} \Bigr\} = \sum_{{\cal S}(\ell)} \prod_{j \in {\cal S}(\ell)} (1-w_j) \prod_{i \not\in {\cal S}(\ell)} w_i \frac{k(X_{i} \mid U_t)}{\hat m_{i-1,T}(X_{i})}, \]
where ${\cal S}(\ell)$ is a generic subset of $\{1,\ldots,\ell\}$ and the sums and products are over all $2^\ell$ such subsets. Going back the formula for $\hat m_{\ell,T}(X_{\ell+1})$, we can distribute the average over $t$ through the product, which gives 
\[ \hat m_{\ell,T}(X_{\ell+1}) = \sum_{{\cal S}(\ell)} \frac{\prod_{j \in {\cal S}(\ell)} (1-w_j) \prod_{i \not\in {\cal S}(\ell)} w_i}{\prod_{i \not\in {\cal S}(\ell)} \hat m_{i-1,T}(X_{i})} \, \Bigl\{ \frac1T \sum_{t=1}^T k(X_{\ell+1} \mid U_t) \prod_{i \not\in {\cal S}(\ell)} k(X_{i} \mid U_t) \Bigr\}. \]
By the induction hypothesis \eqref{eq:induction}, we have that 
\[ \prod_{i \not\in {\cal S}(\ell)} \hat m_{i-1,T}(X_{i}) \to \prod_{i \not\in {\cal S}(\ell)} m_{i-1}(X_{i}), \quad \text{with $P_0$-probability 1, uniformly in ${\cal S}(\ell)$}. \]
Moreover, by the assumption \eqref{eq:kernel.assumption}, the strong law of large numbers gives 
\[ \frac1T \sum_{t=1}^T k(X_{\ell+1} \mid U_t) \prod_{i \not\in {\cal S}(\ell)} k(X_{i} \mid U_t) \to \int k(X_{\ell+1} \mid u) \prod_{i \not\in {\cal S}(\ell)} k(X_{i} \mid u) \, P_0(du), \]
with $P_0$-probability~1, as $T \to \infty$, again uniformly in ${\cal S}(\ell)$.  The two ``uniformly in ${\cal S}(\ell)$'' claims above follow because there are only finitely many such ${\cal S}(\ell)$.  Putting everything together, we have that $\hat m_{\ell,T}(X_{\ell+1})$ converges with $P_0$-probability~1, as $T \to \infty$, to 
\[ \sum_{{\cal S}(\ell)} \frac{\prod_{j \in {\cal S}(\ell)} (1-w_j) \prod_{i \not\in {\cal S}(\ell)} w_i}{\prod_{i \not\in {\cal S}(\ell)} m_{i-1}(X_{i})} \, \Bigl\{ \int k(X_{\ell+1} \mid u) \prod_{i \not\in {\cal S}(\ell)} k(X_{i} \mid u) \, P_0(du) \Bigr\}. \]
Moving the integration over $u$ to the outside of the sum over ${\cal S}(\ell)$ and undoing the product expansion above eventually leads to $\hat m_{\ell,T}(X_{\ell+1}) \to m_\ell(X_{\ell+1})$ with $P_0$-probability~1.  

We showed above that 
\[ \hat m_{i-1, T}(X_i) \to m_{i-1}(X_i) \quad \text{with $P_0$-probability~1 as $T \to \infty$}, \]
uniformly in $i=1,\ldots,n$ without any assumptions on the convergence of the mixing distribution approximation.  Since the final mixing density estimator $\hat p_{n,T}$ is a continuous function of $\{\hat m_{i-1,T}(X_i): i=1,\ldots,n\}$, it follows that 
\[ \hat p_{n,T}(u) \to p_n(u), \quad \text{with $P_0$-probability~1, as $T \to \infty$, for all $u$}. \]
Since these are density functions, it follows from Scheff\'e's theorem that $\hat p_{n,T}$ converges in $L_1(du)$ to $p_n$, with $P_0$-probability~1, as $T \to \infty$.

\bibliographystyle{apalike}
\bibliography{ref}

\end{document}